\definecolor{myblue}{RGB}{80,80,160}
\definecolor{mygreen}{RGB}{80,160,80}
    \tikzstyle{vertex}=[circle,fill=black!25,minimum size=20pt,inner sep=0pt]
    \tikzstyle{edge} = [draw,thick,->]
\def\RR{\mathbb R}
\def\cA{\mathcal A}
\def\cS{\mathcal S}
\def\cM{\mathcal M}
\def\cW{\mathcal W}
\def\bx{\mathbf x}
\def\bx{\mathbf x}
\def\bq{\mathbf q}
\def\bu{\mathbf u}
\def\bw{\mathbf w}
\def\bzero{\mathbf 0}
\def\b1{\mathbf 1}
\newcommand{\qed}{\hfill$\square$\bigskip}
\newcommand{\raf}[1]{(\ref{#1})}
\newcommand{\argmax}{\operatorname{argmax}}
\newcommand{\PoA}{\operatorname{PoA}}
\newcommand{\hide}[1]{}
\newtheorem{theorem}{Theorem}
\newtheorem{lemma}{Lemma}
\newtheorem{example}{Example}
\newtheorem{definition}{Definition}
\newtheorem{claim}{Claim}
\newtheorem{observation}{Observation}
\title{Price of Anarchy in Algorithmic Matching of Romantic Partners}
\author[a]{Andr\'es Abeliuk\thanks{A.A. and K.E. are joint first authors who contributed equally to this work.}}
\author[*b]{Khaled Elbassioni} 
\author[c]{Talal Rahwan}
\author[d]{\authorcr Manuel Cebrian}
\author[d,e]{Iyad Rahwan\thanks{To whom correspondence should be addressed. E-mail: irahwan@mit.edu}}
\affil[a]{Information Sciences Institute, University of Southern California, Marina del Rey CA, USA}
\affil[b]{Department of Computer Science, Khalifa University of Science \& Technology, Abu Dhabi, UAE}
\affil[c]{Computer Science, New York University, Abu Dhabi, UAE}
\affil[d]{Media Laboratory, Massachusetts Institute of Technology, Cambridge MA, USA}
\affil[e]{Institute for Data, Systems and Society, Massachusetts Institute Technology, Cambridge MA, USA}
\begin{document}
\date{}
\maketitle

\begin{abstract}
Algorithmic-matching sites offer users access to an unprecedented number of potential mates. However, they also pose a principal-agent problem with a potential moral hazard. The agent's interest is to maximize usage of the Web site, while the principal's interest is to find the best possible romantic partners. This creates a conflict of interest:  optimally matching users would lead to stable couples and fewer singles using the site, which is detrimental for the online dating industry.
Here, we borrow the notion of Price-of-Anarchy from game theory to quantify the decrease in social efficiency of online dating sites caused by the agent's self-interest.
We derive theoretical bounds on the price-of-anarchy, showing it can be bounded by a constant that does not depend on the number of users of the dating site. This suggests that as online dating sites grow, their potential benefits scale up without sacrificing social efficiency. Further, we performed experiments involving human subjects in a matching market, and compared the social welfare achieved by an optimal matching service against a self-interest matching algorithm.
We show that by introducing competition among dating sites, the selfish behavior of agents aligns with its users, and social efficiency increases. 
\end{abstract}


\maketitle


With online dating on the rise for the past decade, the pool of potential partners has increased from a few thousands of people that you could meet in your daily life, to millions of people. Space and time constraints to meet a partner through your existing social circles, are no longer a restriction with online dating sites; you can potentially be matched with any user in any place of the world. Online dating has enormous potential to ameliorate what is for many people a time-consuming and often frustrating activity. 
  ``Online dating is pervasive, and has fundamentally altered both the romantic acquaintance process and the process of compatibility matching''~\cite{finkel2012online}.
  Indeed, 11\% of American adults and 38\% of those who are currently single and looking for a partner have used online dating sites or mobile dating apps~\cite{smith2013online}. In short, online dating is changing the way we find romantic partners~\cite{rosenfeld2012searching}.

The vast amount of people using dating services imposes a new problem to service providers. The set of potential matches are too large for a single user to search exhaustively. As such, the system must present or suggest a smaller, more tractable subset of potential matches to the user.
This entails that online dating sites must use an algorithm to curate what is presented to users, based on the large amounts of data the system collects from them. For example, {\em OkCupid} claims\footnote{https://theblog.okcupid.com} that their algorithm matches people using ``match percentages'', which basically quantifies how much users have in common. However, none of these algorithms are fully disclosed.

In the early 70's, Herbert Simon presaged that wealth of information would create attention scarcity, and hence, would become a new type of currency \cite{simon1971designing}. Indeed, we see technology companies competing to win people's attention, clicks and loyalty.  This notion, coined as ``attention economy'' \cite{goldhaber1997attention,falkinger2007attention}, is more relevant than ever, and is considered one of the most important determinant of business success. For example, in e-commerce, often more emphasis is being placed on users' future visits and purchases than on maximizing their current satisfaction \cite{smith2002loyalty}. 
Economic theory of ``\textit{planned obsolescence}'' predicts that oligopolists will profit from producing goods with shorter durability, to incentivize the consumers to purchase more frequently \cite{bulow1986economic}. In fact, there is evidence of planned obsolescence in the textbook market \cite{iizuka2007empirical}, where publishers introduce new editions to decrease the value of old ones. This raises the following question:

\emph{How would the social efficiency of online dating website be affected by planned obsolescence, whereby matches with shorter expected ``durability'' are suggested deliberately, and better matches are subsequently introduced to decrease the ``value'' of old matches, all in the hope of maximizing the likelihood of users coming back to the Web site?}


Put differently, since the business model of most online service providers relies heavily on the number of users they have, this creates a fundamental conflict of interest: {\em optimally matching users would lead to stable couples and fewer singles using the site, which is detrimental for the business.}  Hence, a self-interested dating service provider that maximizes its revenue will have as one of its goals the maximization of user \textit{engagement}, which is seldom aligned with maximizing user \textit{utility}. As such, while dating sites offer users access to an unprecedented number of potential mates, they also pose a principal-agent problem with a potential moral hazard (e.g., \cite{balmaceda2010cost}) as the agent (online dating site) has access to more information than the principal (the user).

Driven by these observations, this paper aims at quantifying the social welfare loss that users may experience from a dating site that maximizes engagement instead of user utility. In so doing, we take a first step toward quantifying the maximum social-welfare loss that could result from online dating sites acting selfishly. In particular, we model the centralized matching as a classical weighted matching on bipartite graphs~\cite{papadimitriou1982combinatorial}---with the novel difference being that self-interest behavior is represented by a different objective function to the classic objective of maximizing the sum of weights. Namely, the goal of dating sites is to maximize the users that remain in the system, according to a Markov decision process. To our knowledge, this is the first attempt to model the detrimental effects of selfish (centralized) matchings.

\section{Algorithmic Matching}
Matching in dating sites is an inherently online problem since there is no complete information about the arriving users to the system in advance, but obtained incrementally.
In contrast, we first consider an offline version of the matching problem, in which all users are known in advance and are in the system from the beginning. The online matching problem is analyzed in the Supplementary Information (Section S3).

From the perspective of a social designer, the problem of finding partners for everyone as to maximize social welfare is modeled as a weighted matching on bipartite graphs, where weights represent how well-matched two individuals are. Here, we assume that men's and women's preferences are known. 

Formally, let $\cM:=[m]$ and $\cW:=[n]$ be two finite sets of $m$ ``men'' and $n$ ``women'', respectively. Each individual $i$ is characterized by a vector: $\chi(i)\in\RR_+^d$ if $i\in \cM$ and $\kappa(i)\in\RR_+^d$ if $i\in \cW$. This vector representation is very general, and characterizes each individual through a set of relevant features, such as age, personality trait, film preferences, etc. Let $w_{ij}:=f(\chi(i),\kappa(j))\in[0,1]$ be a measure of how well-matched individuals $i$ and $j$ are.

For the offline model, assume all elements in $\cW$ and $\cM$  are  known by the agent (or system).  The system decides the (partial) distribution $\bx_i:=(x_{ij}:~i\in\cM,~j\in\cW,~\sum_{j\in\cW}x_{ij}\le 1)$ for all $i \in \cM$, and assigns an element $j\in \cW$ to $i$ with probability $x_{ij}$. 
Define the {\it level of satisfaction} or {\it utility} of $i\in\cM$ to be $u_i:=\sum_{j\in\cW} w_{ij}x_{ij}\in[0,1]$.


A {\it fair} agent (from the users' point of view) would assign $j$ to $i$ so as to maximize 
$f(\bu):=\sum_{i}u_i.$ That is, it attempts to solve the following optimization problem:
\begin{align}
z^*_f(\bw):=&\max_{\bx}\textstyle\sum_{i}u_i \label{o-f}\tag{$O_f(\bw)$}\\
\text{s.t.}& \qquad u_i:=\textstyle\sum_j w_{ij}x_{ij},~~\text{ for all }i\in\cM\label{e1}\\
&\qquad\textstyle\sum_{j\in\cW}x_{ij}\le 1,~~\text{ for all }i\in \cM,\label{e2}\\
&\qquad\textstyle\sum_{i\in\cM}x_{ij}\le 1,~~\text{ for all }j\in \cW,\label{e3}\\
&\qquad x_{ij}\ge 0,~~\text{ for all }i\in \cM\text{ and }j\in \cW.\label{e4}
\end{align}
Given that the problem is modeled on bipartite graphs, this linear program will yield integer solutions 
\cite{wolsey1993integer}. In other words, there always exists an optimal integer solution to the fair agent matching problem where every element $i \in \cW$ is matched only with one element $j \in \cM$ or is not match at all.


\subsection{Modeling Self-interested Behavior} \label{sec:markov} 
Next, we introduce the concept of ``user engagement'' into the matching problem. The objective function of a self-interested designer of a matching service is to maximize this value. 

We model the situation for user $i$ in relation with the agent (i.e., matching service) by a {\it Markov decision process} with two states $s_{i,1}$ and $s_{i,2}$, representing the states of being in and out of the system,  respectively. The Markov chain is depicted in Figure~\ref{fig:Markov1}. At state $s_{i,1}$, there is an infinite action set representing the decision $u_i\in[0,1]$. For each such action, the user moves to state $s_{i,2}$ with probability one, where he or she stays there with probability $1-q_i(u_i)$ (for either a successful relationship or a complete dissatisfaction with the matching service), or comes back to the system (state $s_{i,1}$) with probability $q_i(u_i)$, after having received a utility $u_i$.
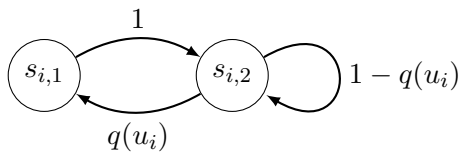
\begin{figure}[h]
\centering
\begin{tikzpicture}[node distance=2.5cm,->,>=latex,auto,
  every edge/.append style={thick}]
  \node[state] (1) {$s_{i,1}$};
  \node[state] (2) [right of=1] {$s_{i,2}$};  
  \path (1) 
            edge[bend left]  node{$1$}   (2)
        (2) edge[in=-30,out=30, loop] node{$1-q(u_i)$}  (2)
            edge[bend left] node{$q(u_i)$}     (1);
\end{tikzpicture}
\caption{Markov process with two states $s_{i,1}$ and $s_{i,2}$, representing the states of being in and out of the system, for user $i$ having received utility $u_i$.} \label{fig:Markov1}
\end{figure}

Let $q_i(u_i)$ be the probability that individual $i$, having received the utility $u_i$, will come back to the system. We will make the natural assumptions\footnote{Some of these assumptions can be relaxed, but we will not do this here to keep the presentation simple.} that $q_i(u_i)$ satisfies:
\begin{itemize}
\item[(A1)] $q_i(0)=q_i(1)=0$;
\item[(A2)] $q_i(u_i)$ is twice-differentiable in $u_i\in[0,1]$;
\item[(A3)] $q_i(u_i)$ is strictly concave in $u_i\in[0,1]$.
\end{itemize}
The intuition behind the assumptions is that users who are dissatisfied by the service (i.e., receive a low utility) are less likely to return and in the extreme case of receiving zero utility, the user will never return to the system. On the other hand, users who are paired with good matches have less incentives to look for a new partner and in the extreme case of having received the maximum utility of one  (i.e., a  perfect match), the user will never return to the system.

Let $\bq:=(q_1,\ldots,q_m)$ be the vector of probability functions.
For $k\in\{1,2\}$, let $\pi_{i,k}(u_i)$ be the limiting probability that the user will be in state $s_{i,k}$, given that the decision made by the agent for the user is $u_i$. Then we have:
\begin{equation}\label{eq:pi}
\pi_{i,1}(u_i)=\frac{q_i(u_i)}{1+q_i(u_i)}\text{ and } \pi_{i,2}(u_i)=\frac{1}{1+q_i(u_i)}.
\end{equation}
 
\begin{observation}\label{obs:1}
$\pi_{i,1}(u_i)$ satisfies (A1), (A2) and (A3) (with $q_i$ replaced by $\pi_{i,1}$).
\label{ob1}
\end{observation}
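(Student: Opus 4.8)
The plan is to verify the three properties in turn for $\pi_{i,1}(u_i)=\frac{q_i(u_i)}{1+q_i(u_i)}$, viewing it as the composition $\pi_{i,1}=g\circ q_i$ of the scalar map $g(t):=\frac{t}{1+t}$ with $q_i$. Since $q_i$ takes values in $[0,1]$, throughout we only ever evaluate $g$ on the ray $t\ge 0$, where $1+t>0$ and $g$ is well behaved. Property (A1) is then immediate: because $q_i(0)=q_i(1)=0$, we get $\pi_{i,1}(0)=\frac{0}{1+0}=0$ and likewise $\pi_{i,1}(1)=0$. Property (A2) follows from the chain rule: $g$ is infinitely differentiable on $[0,\infty)$ (its denominator never vanishes there) and $q_i$ is twice-differentiable by (A2) applied to $q_i$, so $g\circ q_i$ is twice-differentiable.

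The substantive part is (A3), strict concavity. First I would record that $g$ is strictly increasing and concave on $t\ge 0$, via the direct computation $g'(t)=(1+t)^{-2}>0$ and $g''(t)=-2(1+t)^{-3}<0$. I would then invoke the general fact that the composition of a strictly increasing concave function with a strictly concave function is strictly concave. Concretely, for distinct points $v,v'\in[0,1]$ and $\lambda\in(0,1)$, strict concavity of $q_i$ gives $q_i(\lambda v+(1-\lambda)v')>\lambda q_i(v)+(1-\lambda)q_i(v')$; applying the strictly increasing $g$ preserves this strict inequality, while applying concavity of $g$ to the right-hand side yields $g\bigl(\lambda q_i(v)+(1-\lambda)q_i(v')\bigr)\ge \lambda g(q_i(v))+(1-\lambda)g(q_i(v'))$. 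Chaining the two displays shows $\pi_{i,1}(\lambda v+(1-\lambda)v')>\lambda\pi_{i,1}(v)+(1-\lambda)\pi_{i,1}(v')$, which is exactly strict concavity of $\pi_{i,1}$.

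The main obstacle is a subtlety I would deliberately route around: the naive approach of computing $\pi_{i,1}''=g''(q_i)\,(q_i')^2+g'(q_i)\,q_i''$ and declaring it negative is not airtight, because strict concavity of $q_i$ only guarantees $q_i''\le 0$ rather than $q_i''<0$ everywhere (for instance $q_i'$ and $q_i''$ can vanish simultaneously at an interior point), so this second-derivative expression could momentarily equal zero. The definition-based composition argument above is immune to this, which is precisely why I would establish (A3) through the monotonicity and concavity of $g$ rather than through a pointwise sign analysis of $\pi_{i,1}''$.
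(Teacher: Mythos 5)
Your proof is correct, and for (A3) it takes a genuinely different route from the paper. The paper computes the second derivative directly,
$\pi_{i,1}''(u_i)=\frac{(1+q_i(u_i))q_i''(u_i)-2(q_i'(u_i))^2}{(1+q_i(u_i))^3}$,
and asserts that it is strictly negative on $[0,1]$ ``by the strict concavity assumption on $q_i$''; you instead argue from the definition of concavity, writing $\pi_{i,1}=g\circ q_i$ with $g(t)=t/(1+t)$ strictly increasing and concave on $t\ge 0$, and using the standard fact that a strictly increasing concave transformation of a strictly concave function is strictly concave. The subtlety you flag is genuine: strict concavity together with twice-differentiability yields only $q_i''\le 0$, and a function such as $q_i(u)=\tfrac{1}{16}-\bigl(u-\tfrac{1}{2}\bigr)^4$ satisfies (A1)--(A3) yet has $q_i'(\tfrac{1}{2})=q_i''(\tfrac{1}{2})=0$, so the paper's numerator vanishes at $u_i=\tfrac{1}{2}$ and the claimed pointwise strict negativity fails there. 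Your composition argument is therefore slightly more robust than the paper's, and it still delivers everything used downstream: strict concavity of $\pi_{i,1}$ implies $\pi_{i,1}'$ is strictly decreasing, which is exactly what the proof of Theorem~\ref{t1} invokes. What the paper's computation buys in exchange is brevity and the explicit formula for $\pi_{i,1}''$, which does give strict negativity under the mildly stronger (and, in practice, typical) hypothesis that $q_i''<0$ everywhere. Your handling of (A1) and (A2) coincides with the paper's, which dismisses both as immediate from \raf{eq:pi}.
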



\medskip

The agent would {\it selfishly} allocate users so as to maximize the {\it expected number of users returning to the system}
$s(\bu;\bq):=\sum_{i}\pi_{i,1}(u_i)$. That is, the selfish central designer attempts to solve the following optimization problem:
\begin{align}
z^*_s(\bw;\bq):=&\max_{\bx}\sum_{i}\pi_{i,1}(u_i)\label{o-s}\tag{$O_s(\bw;\bq)$} \\
\text{s.t.}& \qquad \text{ Constraints \raf{e1}-\raf{e4}}.
\end{align}

In the {\it offline} case, both problems \raf{o-f} and \raf{o-s} can be solved optimally in polynomial time via linear programming and convex programming, respectively. Next, we quantify the loss of efficiency in the system induced by the selfish behavior of the central designer with the notion of Price of Anarchy (PoA) \cite{koutsoupias1999worst}.

\begin{definition}[Price of Anarchy]\label{PoA}
Let $\bu^*_f$ and $\bu_s^*(\bq)$ be the optimal solutions of the problem for the fair agent \raf{o-f} and selfish agent \raf{o-s}, respectively.
Define the {\it price of anarchy} $\PoA=\PoA(\bq)\in[0,1]$ as the ratio between the worst possible selfish matching and the social optimum. Formally,
$$
\PoA(\bq):=\min_{\bw\ge \bzero}\frac{f(\bu_s^*(\bq))}{f(\bu^*_f)}.
$$
%
%
\end{definition}

\section{Bounds on the Price of Anarchy}
In this section, we present our main theoretical result stating that the price of anarchy can be bounded by a constant that depends only on the functions $q_i$. This result entails two conclusions: (1) as online dating sites grow, their potential benefits scale up without sacrificing social efficiency; (2) the loss of social utility is solely driven by user's behavior, which is exogenous to the matching service, and thus, if modified can improve the efficiency of the system. In other words, the system can only exploit users to the extent that they let it. This idea will be further explored in the next section by modelling users' behavior in a competitive market instead of in a monopoly.

Formally, we show that for any vector of probability functions $\bq=(q_1,\ldots,q_m)$ satisfying (A1)-(A3), the price of anarchy ($\PoA(\bq)$) can be bounded by a constant that depends only on the functions $q_i$, but not on the number of users $m$.
\begin{theorem}\label{t1}
Let $H(\bq):=\max_{i\in\cM}\{q_i'(0)\}>0$, and $c>0$ be the unique solution of the equation 
\begin{equation}\label{eq:c}
c=\frac{H(\bq)}{2}\cdot L(\bq,c),
\end{equation}
where $L(\bq,c)=\min_{i\in\cM}\overline{u}_i(c)$, and where for $i\in\cM$,  $\overline{u}_i(c)\in[0,1]$ denotes the unique positive solution of the equation 
\begin{equation}\label{eq:u}
c=\pi_{i,1}'(u_i) = \frac{q_i'(u_i)}{(1+q_i(u_i))^2}.
\end{equation}
Then, 
$$\PoA(\bq)\ge L(\bq,c)/2.$$
\end{theorem}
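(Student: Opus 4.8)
The plan is to unfold the definition of $\PoA(\bq)$ and prove, for every fixed weight matrix $\bw\ge\bzero$, that $f(\bu_s^*)\ge \tfrac{L}{2}\,f(\bu_f^*)$. I abbreviate $F_s:=f(\bu_s^*)=\sum_i u_i^s$ and $F_f:=f(\bu_f^*)=\sum_i u_i^f$, writing $u_i^s,u_i^f$ for the utilities at the selfish and fair optima and $L:=L(\bq,c)$, $\overline u_i:=\overline u_i(c)$. I may assume $L>0$ (otherwise the bound is vacuous); since each $\pi_{i,1}'$ is strictly decreasing (Observation~\ref{obs:1}), this forces $c<\pi_{i,1}'(0)=q_i'(0)$ for all $i$, so each $\overline u_i\in(0,1)$ is the genuine threshold at which $\pi_{i,1}'$ crosses the value $c$.

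Two structural facts drive the argument. First, \emph{no user is served past its engagement peak}: if $u_i^s$ exceeded the maximizer of the concave $\pi_{i,1}$, one could lower some $x^s_{ij}>0$, which is feasible (every constraint only slackens), strictly increases $\pi_{i,1}(u_i^s)$, and hence raises the selfish objective, contradicting optimality. Consequently $\pi_{i,1}'(u_i^s)\ge 0$ for all $i$, and moreover $\pi_{i,1}'(u)\le \pi_{i,1}'(0)=q_i'(0)\le H(\bq)$ for every $u$, again by concavity. Second, first-order optimality of the concave program \raf{o-s} at $\bx^s$, tested against the feasible fair solution $\bx^f$, gives $\sum_i \pi_{i,1}'(u_i^s)\,(u_i^f-u_i^s)\le 0$, since the objective gradient in $x_{ij}$ is $\pi_{i,1}'(u_i^s)w_{ij}$; that is, $\sum_i \pi_{i,1}'(u_i^s)\,u_i^f \le \sum_i \pi_{i,1}'(u_i^s)\,u_i^s$.

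I would then split $\cM$ into $P=\{i:u_i^s\le\overline u_i\}$ and $Q=\{i:u_i^s>\overline u_i\}$. Monotonicity of $\pi_{i,1}'$ gives $\pi_{i,1}'(u_i^s)\ge c$ on $P$ and $0\le\pi_{i,1}'(u_i^s)<c$ on $Q$, while $u_i^s>\overline u_i\ge L$ on $Q$. Feeding these into the optimality inequality, bounding its left side below by $c\sum_{P}u_i^f$ (the $Q$ terms are nonnegative) and its right side above by $H\sum_{P}u_i^s+c\sum_{Q}u_i^s$, and dividing by $c=\tfrac{H}{2}L$, yields $\sum_{P}u_i^f\le \tfrac{2}{L}\sum_{P}u_i^s+\sum_{Q}u_i^s$. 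Separately, each $i\in Q$ satisfies $u_i^f\le 1< u_i^s/L$, so $\sum_{Q}u_i^f\le\tfrac1L\sum_{Q}u_i^s$. Adding the two estimates and using $1+\tfrac1L\le\tfrac2L$ (valid since $L\le1$) collapses everything to $F_f\le \tfrac{2}{L}F_s$, which is the claim.

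The main obstacle is landing the constants exactly on $L/2$ rather than a weaker bound: a naive comparison, e.g. passing through the selfish objective value and applying $\pi_{i,1}(u)\le H u$ uniformly, loses an extra factor of $L$ (one gets only $L^2/2$). The decisive idea is the $P/Q$ dichotomy, using the sharp slope bound $\pi_{i,1}'\ge c$ only where the selfish utility lies below $\overline u_i$, and charging the heavily matched users of $Q$ directly against their own (necessarily large, $\ge L$) selfish utilities, so that the self-consistent choice $c=\tfrac{H}{2}L$ is precisely what balances the two regimes. I would also need to dispatch the degenerate case $L=0$ and to confirm that the ``no overshoot'' fact holds at whichever selfish optimum realizes $f(\bu_s^*)$.
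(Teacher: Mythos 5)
Your proof is correct, but it executes the key global step differently from the paper, so a comparison is worthwhile. The dichotomy is the same: your $P$ and $Q$ coincide (up to the boundary case $\pi_{i,1}'(\widehat u_i)=c$) with the paper's $\cS(c)=\{i:\pi_{i,1}'(\widehat u_i)>c\}$ and its complement, and both arguments hinge on the self-consistent balance $c=\frac{H(\bq)}{2}L(\bq,c)$ together with the observation that users on the well-served side have $\widehat u_i\ge\overline u_i(c)\ge L$. The difference is in how the badly-served users are charged against the fair optimum: the paper invokes the full KKT conditions for \raf{o-s}, defines modified weights $\widetilde\bw$ that zero out the rows outside $\cS(c)$, verifies that the scaled multipliers $(\widehat\beta_i/c,\widehat\sigma_j/c)$ are feasible for the dual LP $D_f(\widetilde\bw)$, and caps $\sum_{i\in\cS(c)}u_i^*$ by LP duality, whereas you replace this entire apparatus with the one-line variational inequality $\sum_i\pi_{i,1}'(u_i^s)(u_i^f-u_i^s)\le 0$, valid because the fair solution is feasible for the smooth concave selfish program. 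This is genuinely more elementary --- no modified weight matrix, no dual LP, no duality --- and it buys a shorter proof of the same constant. The price is that your route needs the sign condition $\pi_{i,1}'(u_i^s)\ge 0$ in order to discard the $Q$-terms on the left-hand side, a fact the paper's multiplier bookkeeping never requires; your perturbation argument for this ``no overshoot'' property is sound (decreasing some $x^s_{ij}>0$ remains feasible and strictly increases the objective whenever $u_i^s$ lies past the peak of the concave $\pi_{i,1}$), and since it applies at an arbitrary optimizer, the caveat in your last sentence is already discharged. The closing bookkeeping also differs harmlessly: the paper finishes with a mediant-type inequality yielding $\min\{c/H(\bq),\,L/2\}$, while you add the two estimates and use $1+1/L\le 2/L$ (valid as $L\le 1$); both land exactly on $L/2$. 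Your handling of degeneracy ($L>0$ forcing $c<q_i'(0)$ for every $i$) matches the paper's well-definedness discussion, which places the root $c$ in $(0,h(\bq))$ with $h(\bq)=\min_i q_i'(0)$.
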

\begin{figure}[t!]
\begin{centering}
\includegraphics[width=0.9\columnwidth]{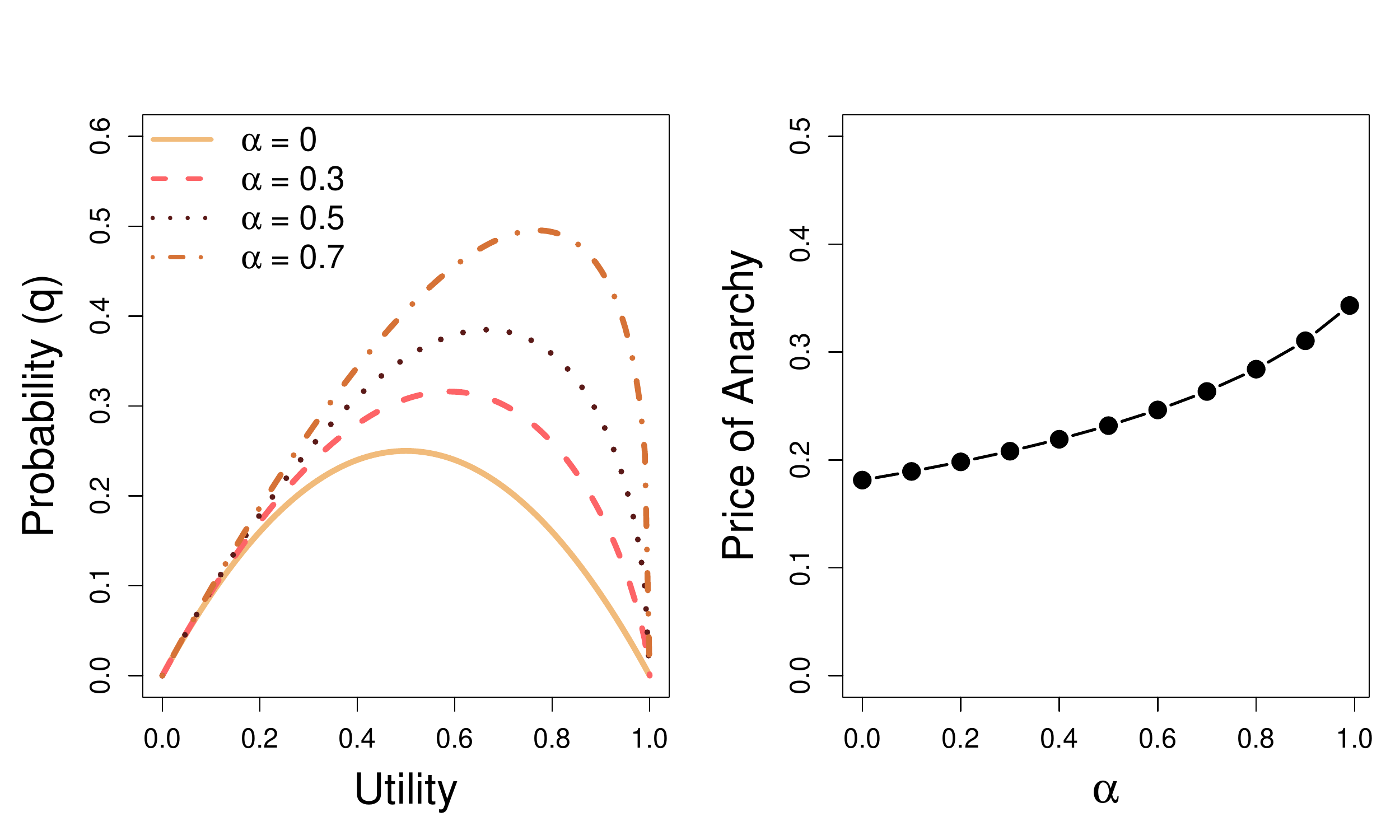}
\par\end{centering}
\caption{\label{fig:example1} Left: The figure shows probability functions $q_i(u):=u(1-u)^{(1-\alpha)}$ for different $\alpha$'s. Right: The Price of Anarchy as a function of $\alpha$.  }
\end{figure}
\begin{example}\label{ex1}
Let us suppose that $q_i(u_i):=u_i(1-u_i)^{(1-\alpha)}$ for all $i=1,\ldots,m$, which satisfies assumptions (A1)-(A3) for $0\leq \alpha<1$. Figure~\ref{fig:example1} depicts the probabilities for different $\alpha$'s.  
 Next, for ease of exposition, we find the Price of Anarchy  for $\alpha=0$ and show in Figure~\ref{fig:example1} the results for all $\alpha$'s.  
Let $\overline{u}_i=u$ be the unique solution of Equation \raf{eq:u}: $c=(1 - 2 u)/((1 - u) u + 1)^2 $. Also, we have that $H(\bq)=q'(0)=1$, and $c$ is chosen to satisfy \raf{eq:c}, which means that $c=u/2$. The unique solution in $(0,1)$ is $u=0.363$, which implies that $\PoA(q)\ge0.1815$.
\end{example}

\subsection{The Invisible-Hand Effect}\label{sec:competition}
So far, we have considered the monopolistic case where users only have one option to go and get matched. However, in a competitive market, utility-maximizing users who freely choose among competing dating sites will steer the system  towards a better social utility. Next, we explore the effect of competition among dating sites. To this end, we assume that users' probability to remain dating in a particular site is proportional to the utility they have previously received by that site. That is, the conditional probability of a user returning to the same dating site having received previously a utility of $u_i$, given that he or she wants to date again is $\Pr(\text{return}|\text{dates again})=u_i$. The probability of dating again is define as in the previous section, i.e., $\Pr(\text{dates again})=q(u_i)$. Thus, $\Pr(\text{return}) =\Pr(\text{return}|\text{dates again})\Pr(\text{dates again})=u_i \cdot q(u_i)$. Similarly, given that a user wants to date again, he or she will consider a competing dating site with probability $1-u_i$. 


\begin{figure}[h]
\centering
\begin{tikzpicture}[node distance=3.5cm,->,>=latex,auto, scale=1,
every node/.style={scale=1.},
  every edge/.append style={thick}]
  \node[state] (1) {$s_{i,1}$};
  \node[state] (2) [right of=1] {$s_{i,2}$};  
  \node[state] (3) [right of=2] {$s_{i,3}$}; 
  \path (1) 
            edge[bend left]  node[below]{$1$}   (2)
        (2) edge[loop below] node{$1-q(u_i)$}  (2)
            edge[bend left] node{$q(u_i)\cdot u_i$} (1)
            edge[bend right] node[below] {$q(u_i)(1-u_i)$}  (3)
        (3) edge[loop above]  node{$1-\epsilon$} (3)
            edge[bend right=37]  node[above]{$\epsilon$}   (1);
\end{tikzpicture}
\caption{Markov process with three-states $s_{i,1}$, $s_{i,2}$ and $s_{i,3}$, representing, respectively, the states of being dating in the system; not looking for a partner; and being out of the system (or in the competition), for user $i$ having received utility $u_i$.} \label{fig:Markov2}
\end{figure}
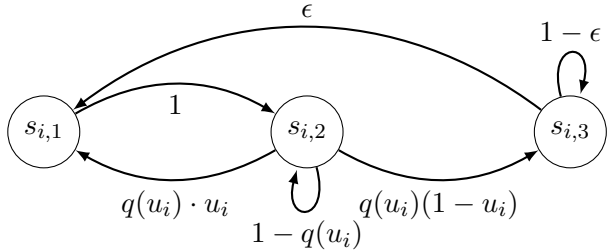

We extend the Markov decision process defined in Section~\ref{sec:markov}, to model the situation with competition for user $i$ as a three-state Markov process. The first state represents looking for a partner with the dating agent; the second represents being in the system but not looking for a partner; the third represents leaving to a competitor dating site. We assume that once a user leaves, there is a small probability $\epsilon$ of returning back. $\epsilon>0$ ensures the Markov chain is ergodic and has a stationary distribution.
The Markov chain is depicted in Figure~\ref{fig:Markov2}.

From state $s_{i,1}$, the user moves to state $s_{i,2}$ with probability $1$, where he or she stays there with probability $1-q_i(u_i)$ (for either a successful relationship or a complete dissatisfaction with the agent), or comes back to the system (state $s_{i,1}$) with probability $q_i(u_i) \cdot u_i$ (notice that without competition this probability was $q_i(u_i)$). The user can also move to state $s_{i,3}$ (for a user switching to the competition) from state $s_{i,2}$ with probability $q_i(u_i)(1-u_i)$, where he or she stays there with probability $1-\epsilon$ ($\epsilon$ being smaller as the competition is higher), or comes back to the system with probability $\epsilon$. Then we have
\begin{equation}\label{eq:competition}
\pi_{i,1}(u_i) = \frac{q(u_i)}{1+q_i(u_i)+\frac{q_i(u_i)}{\epsilon}(1-u_i)}.
\end{equation}
Note that $\pi_{i,1}(u_i)$ is not necessarily a concave function, and thus, the result from Theorem~\ref{t1} cannot be directly applied. However, 
we show in the following lemma that as competition between service providers grows, the self-interested behavior of system designers converges to maximize the social utility of users. We present empirical validation in Example~\ref{ex:2}.
\begin{lemma}\label{lem:1}
$\argmax_{u_i}\pi_{i,1}(u_i)=1$,
as $\epsilon \rightarrow 0$.   
\end{lemma}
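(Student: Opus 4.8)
The plan is to analyze the function $\pi_{i,1}(u_i)$ given in \raf{eq:competition} directly as $\epsilon\to 0$, and to show that its maximizer over $[0,1]$ tends to the right endpoint $u_i=1$. First I would observe that for fixed $\epsilon>0$ the denominator is dominated, for $u_i$ bounded away from $1$, by the term $\frac{q_i(u_i)}{\epsilon}(1-u_i)$, which blows up like $1/\epsilon$. Concretely, I would write
\[
\pi_{i,1}(u_i)=\frac{\epsilon\,q_i(u_i)}{\epsilon(1+q_i(u_i))+q_i(u_i)(1-u_i)},
\]
so that for any fixed $u_i\in(0,1)$ we have $\pi_{i,1}(u_i)=O(\epsilon)\to 0$, whereas near $u_i=1$ the factor $(1-u_i)$ kills the large term and keeps $\pi_{i,1}$ bounded away from $0$. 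This already suggests the maximizer must migrate toward $1$.

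The key technical step is to make the behaviour near $u_i=1$ precise using assumption (A1), namely $q_i(1)=0$, together with differentiability (A2). I would set $u_i=1-t$ for small $t>0$ and Taylor-expand: since $q_i(1)=0$, we have $q_i(1-t)=-q_i'(1)\,t+O(t^2)$, and because $q_i$ is concave with $q_i(0)=q_i(1)=0$ and positive in between, $q_i'(1)<0$, so $q_i(1-t)\approx a t$ with $a:=-q_i'(1)>0$. Substituting into the rewritten expression, the dominant term $q_i(u_i)(1-u_i)\approx a t^2$ now competes with $\epsilon\,q_i(u_i)\approx \epsilon a t$ in the denominator, giving
\[
\pi_{i,1}(1-t)\approx\frac{\epsilon\,a t}{\epsilon(1+at)+a t^2}\approx\frac{\epsilon\,t}{\epsilon+t^2}.
\]
The one-variable function $t\mapsto \epsilon t/(\epsilon+t^2)$ is maximized at $t=\sqrt{\epsilon}$, so the optimal displacement from $1$ scales like $\sqrt{\epsilon}\to 0$, confirming $\argmax_{u_i}\pi_{i,1}(u_i)\to 1$.

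I expect the main obstacle to be turning this asymptotic/heuristic matching into a rigorous argument, since the maximizer sits in a shrinking $\sqrt{\epsilon}$-neighborhood of the boundary where the leading-order expansion must be controlled uniformly. To handle this cleanly I would either (i) differentiate $\pi_{i,1}$ with respect to $u_i$, set the derivative to zero, and show that any interior critical point $\overline{u}_i(\epsilon)$ satisfies $1-\overline{u}_i(\epsilon)\to 0$ by a compactness/contradiction argument (if some subsequence of maximizers stayed at distance $\delta>0$ from $1$, the value there would be $O(\epsilon)$, contradicting that the value at $u_i=1-\sqrt\epsilon$ is bounded below by a positive constant times $\sqrt\epsilon$), or (ii) bound $\pi_{i,1}$ above on $[0,1-\delta]$ by $O(\epsilon)$ and below at a suitable test point by order $\sqrt\epsilon$, forcing the maximizer into $(1-\delta,1]$ for every $\delta$. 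Option (ii) is the more robust route and avoids the messy algebra of the exact critical-point equation, so that is the line I would follow.
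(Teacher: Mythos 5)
Your proposal is correct, but it follows a genuinely different route from the paper's proof. The paper works with the exact first-order condition: differentiating \raf{eq:competition} shows the numerator of $\pi_{i,1}'$ is $q_i'(u_i)+q_i^2(u_i)/\epsilon$, so any interior maximizer $u_i^*$ satisfies $\epsilon=-q_i^2(u_i^*)/q_i'(u_i^*)$; this forces $q_i'(u_i^*)<0$, uniqueness follows because $-q_i^2/q_i'$ is decreasing on the interval where $q_i'<0$, and letting $\epsilon\to0$ gives $q_i(u_i^*)\to0$, hence $u_i^*\to1$. You instead avoid the critical-point equation entirely and run a two-regime comparison: away from $1$ one has the clean uniform bound $\pi_{i,1}(u_i)=\frac{q_i(u_i)}{1+q_i(u_i)+q_i(u_i)(1-u_i)/\epsilon}\le\frac{\epsilon}{1-u_i}\le\frac{\epsilon}{\delta}$ on $[0,1-\delta]$, while at the test point $u_i=1-\sqrt{\epsilon}$ the value is $\Theta(\sqrt{\epsilon})$ (using $q_i(1-t)\sim -q_i'(1)\,t$, where $-q_i'(1)>0$ follows from strict concavity and $q_i(0)=q_i(1)=0$), so for every $\delta$ the maximizer eventually lies in $(1-\delta,1]$. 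Both arguments are sound; the paper's buys uniqueness of the maximizer and an exact implicit relation between $\epsilon$ and $u_i^*$ (from which your $1-u_i^*=\Theta(\sqrt{\epsilon})$ scaling can also be read off, since $\epsilon\approx -q_i'(1)(1-u_i^*)^2$ near $1$), whereas yours is more robust---it does not need the critical-point equation to characterize the maximum and would survive multiple critical points---and it exhibits the $\sqrt{\epsilon}$ convergence rate explicitly; you could even dispense with the Taylor expansion by using the concavity chord bound $q_i(1-t)\ge \frac{t}{1-u_0}\,q_i(u_0)$ for any fixed $u_0\in(0,1)$. One small slip in your informal preamble: near $u_i=1$ the function is \emph{not} bounded away from $0$ (indeed $\pi_{i,1}(1)=0$ and the maximum value itself decays like $\sqrt{\epsilon}$); your subsequent quantitative version, comparing $\Theta(\sqrt{\epsilon})$ against $O(\epsilon)$, is the correct statement and is what your option (ii) actually uses.
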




\section{Experimental Study}
Our theoretical results provide bounds on the Price of Anarchy, which is defined as the worst-case efficiency level. We showed that even in a monopolistic setting, the price of anarchy can be bounded by a constant that does not depend on the number of users, but only depends on users' implicit expectation of the system. However, our theoretical results are constrained to the functional form $q_i$ dictating users behavior, which in practice can vary from one user to another and even change over time (although, we do expect that in the average, the overall behavior follows assumptions (A1) to (A3) from Section~\ref{sec:markov}). To overcome this issue, we designed a human subject experiment to test our results, where $q$ is learned from the participants.  

In a series of experiments, we compared the social welfare achieved by the optimal matching algorithm against the selfish matching algorithm. 
We operationalized the algorithmic matching market as a centralized multi-armed bandit problem. Figure~\ref{fig:infographic} provides a high-level illustration of the experiment. In more detail, participants arrive into the market and are assigned to a slot machine that gives them a stochastic payoff. Participants play for ten rounds, at each round deciding between three options: (1) continue with the slot machine assigned; (2) request from the central designer to be assigned a new slot machine; (3) or take the risk-free outside option and finish the game. To make this analogy more concrete, payoffs capture going out on a date with someone and observing the value of their match; option (1) would represent having multiple dates with the same match and engaging in a long-term relationship; option (2) represents going back to the dating site to find a new match; and finally option (3) would be users that either move to a different dating site or stop using online dating altogether.


\begin{figure}
\begin{centering}
\includegraphics[width=0.5\columnwidth]{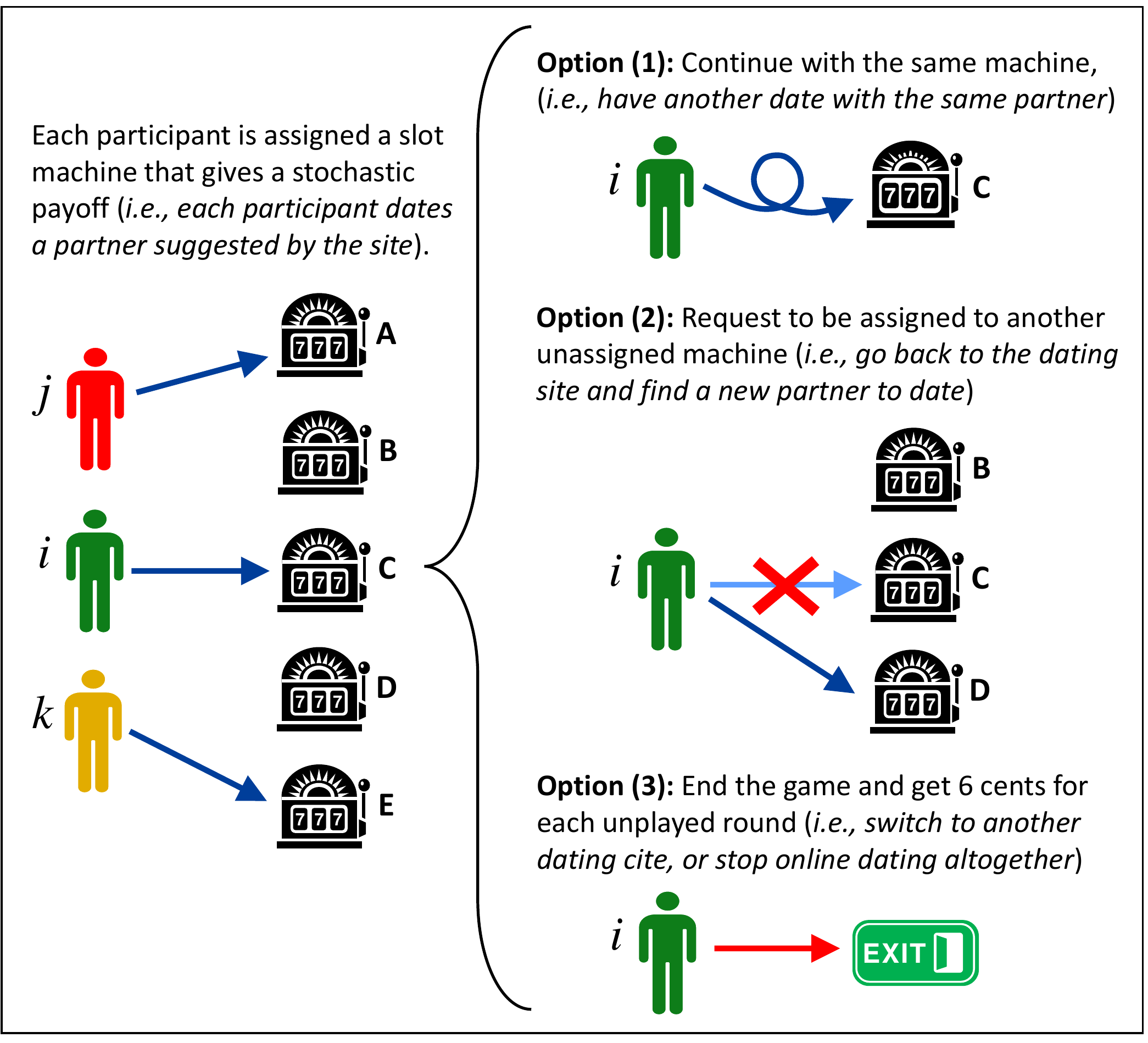}
\par\end{centering}
\caption{\label{fig:infographic} An illustration of the experimental study, showing the initial assignments of participants, and the three options available in each of the 10 rounds of the experiment.}
\end{figure}

\begin{figure*}
\begin{centering}
\subfloat[\label{fig:Study_A_dist}Study~A]
{\begin{centering}
\includegraphics[width=0.5\linewidth]{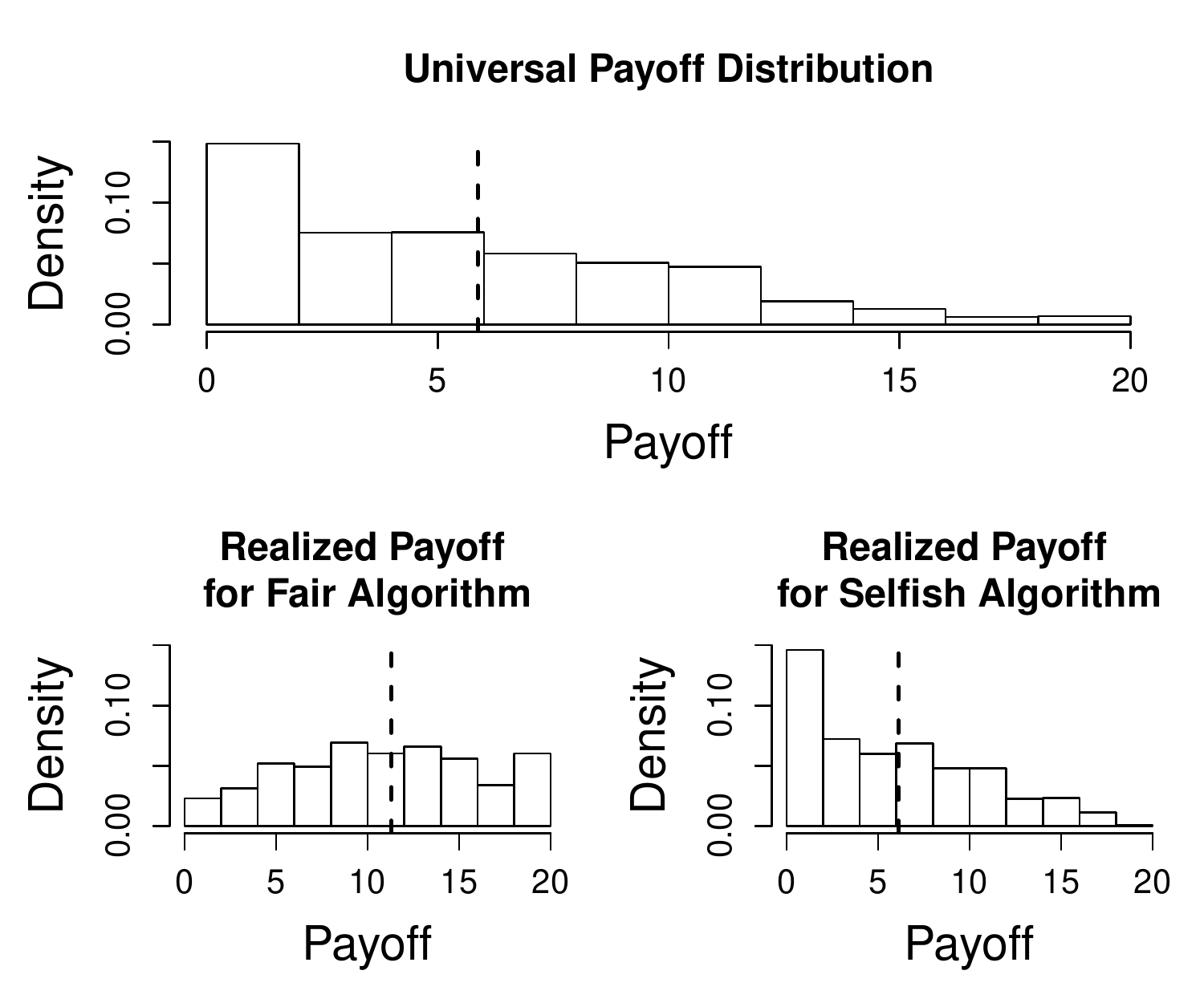}
\par\end{centering} 
}
\subfloat[\label{fig:Study_B_dist}Study~B]
{\begin{centering}
\includegraphics[width=0.5\linewidth]{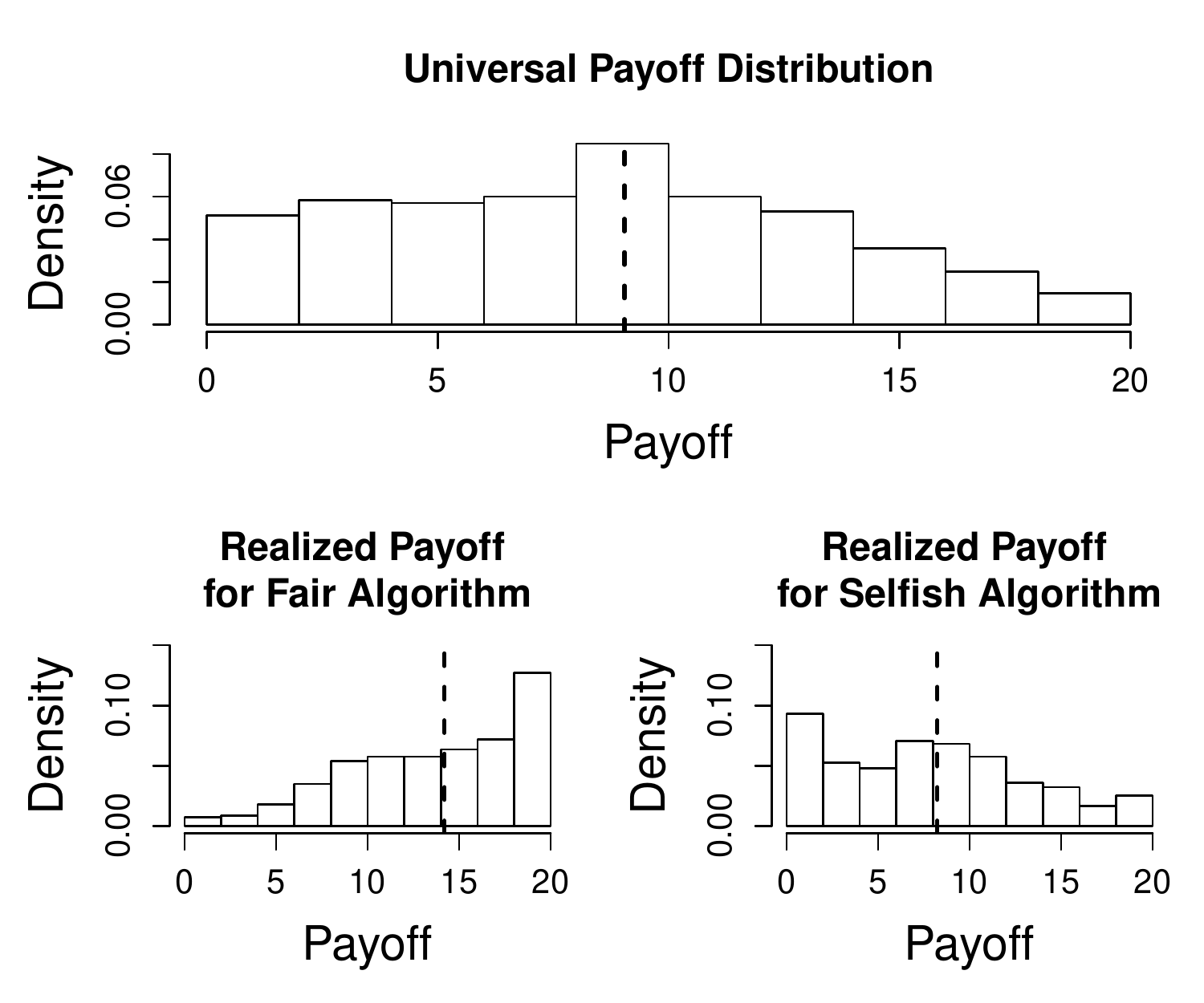}
\par\end{centering}
}

\subfloat[\label{fig:Study_C_dist}Study~C]
{\begin{centering}
\includegraphics[width=0.5\linewidth]{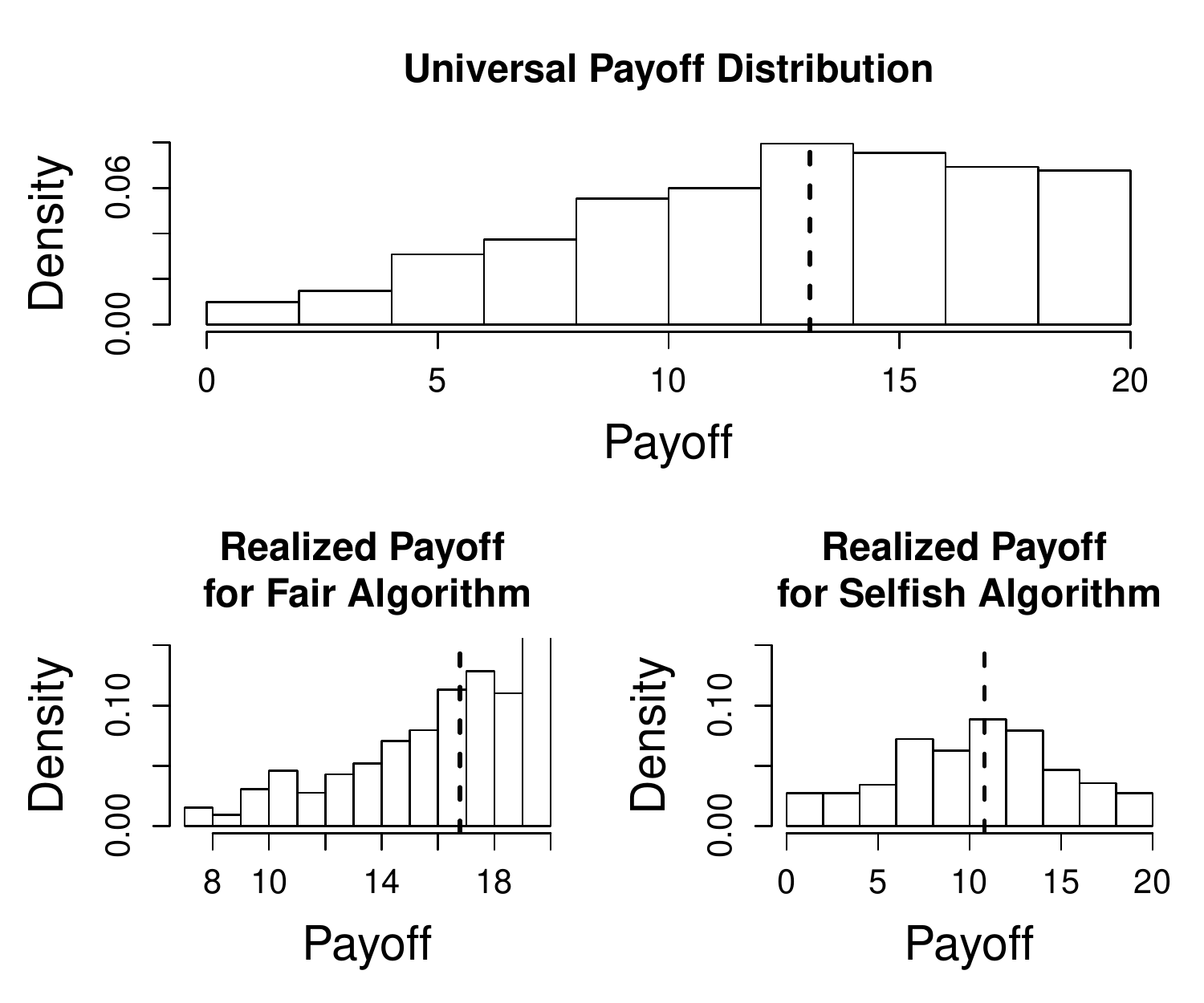}
\par\end{centering}
}
\par\end{centering}
\caption{\label{fig:Distribution-of-payoffs}Distribution of payoffs. The top distributions of each sub-figure correspond to the payoff distribution of all user-slot pairs for each study. 
The bottom distributions depict the realized payoffs experienced by participants. That is, it correspond to the payoff distribution of user-slot pairs that were matched. Vertical dashed lines represent the mean of the distribution. The universal distributions correspond to the realized payoffs of a matching algorithm that randomly assigns users to slots. Notice that for the Selfish algorithm there is a shift towards the left compared to the universal (slightly decreasing the mean), whereas for the Fair algorithm the shift is towards the right and a significant increase in the mean.}
\end{figure*}

The algorithm used to match players takes two forms. The Fair matching algorithm assigns agents to slot machines in order to maximize the social utility of all agents.  The Selfish matching algorithm assigns agents to slot machines in order to maximize the number of agents requesting re-assignations to the system. One of the advantages of this experimental design is that the central designer has perfect information on the expected payoffs, which allows us to quantify the impact of the matching algorithms without having to deal with the challenges of inferring and predicting the preferences of users, which is out of the scope of this paper. 
The interested reader is referred to~\cite{hitsch2010makes} for a work on
estimating mate preferences from an online dating service. 

Therefore, the Fair matching algorithm is given by solving problem $O_f(\bw)$. However, for the selfish matching, in order to solve $O_s(\bw;\bq)$ we need to learn the probability $q_i(u)$ that a user having received utility $u$, will come back to the system on round $i$. Given that individual data is sparse, we assume users are homogeneous and start with a prior probability $q_1(u)$ satisfying assumptions (A1) to (A3). Then, we update the probability based on the past choices of participants as follows:
\[
q_{i+1}(u) = \alpha q_i(u) + (1-\alpha)f_i(u),
\]
where $f_i(u)$ is the fraction of participants who, having received a payoff $u$ at round $i$, requested to be re-assigned; see Figures S5 to S7 for the distribution of $f_i(u)$. 

Each instance of the game consists of $2n$ participants which are evenly randomized between the two experimental conditions: the Fair matching algorithm and the Selfish matching algorithm.
Each player $i\in [n]$ has a mean payoff $\mathbb{E}[p_{ij}]= w_j+\epsilon_i$ for each of the $j\in [m]$ slots machines, where $\epsilon_i$ is a realization of a normally distributed variable with zero mean and variance of $3$. Thus, rewards from machines are correlated across participants, meaning that there is competition for high yielding slot machines. The game is played in real-time (or synchronously) with other players, such that any slot machine at any time can only be assigned to one player in one of the experimental conditions. That is, if a participant decides to remain with that slot in the next round, then that slot will be unavailable for other participants. If the participant decides to switch from a slot machine, then he or she will never be assigned again to the same slot. To continue our analogy with dating services, the above represents the case where users date only one partner at a time; the partner is not dating anyone else; and some users are more universally appealing than others.

Rewards of the slots are stochastic and their mean payoffs are known in advance by the central designer, but not to the participants. Formally, if player $i$ is assigned to slot $j$, he or she receives a payoff following a normal distribution with mean $\mathbb{E}[p_{ij}]=w_j+\epsilon_i$ (and variance set to $3$). Hence, users need repeated interactions to estimate the underlying reward.
To compare the matching algorithms, the mean payoffs of players from slot machines are the same across the two matching conditions, e.g., the first participant assigned to the Fair condition has the same mean payoff per slot machine as the first participant assigned to the Selfish condition. 

At any round, the participants may choose to stop being matched by the algorithm and take an outside payment of 6 cents for each round that is left unplayed, or may choose to continue playing in the hope of making a greater profit. We ran three studies with groups of $2n=6$ participants simultaneously and $m=13$ slots machines, that only differ in how the payoffs $p_{ij}$'s are distributed: Study~A represents a competitive market (80-20 rule) where payoffs follow a right-skewed distribution; Study~B has payoffs following a symmetric distribution (similar to a normal), and Study~C represents a non-competitive market where payments follow a left-skewed distribution. The top histograms in Figure~\ref{fig:Distribution-of-payoffs} show the distribution for each study.
The underlying distribution of possible payoffs also correspond to how payoffs would look like if the matching algorithm was randomly assigning users to slots. The lower histograms in Figure~\ref{fig:Distribution-of-payoffs} show the distribution or realized payoffs that users got in each experimental condition.
Compared to the underlying distribution, the Fair algorithm is able to transform the distribution of payoffs, increasing the mean payoff and shifting the distribution to the right in all three studies. On the other hand, the Selfish algorithm performs worse than the underlying distribution, and is therefore inferior to random matching.

\begin{figure}[t]
\begin{centering}
\subfloat[\label{fig:Study_A_SW}Study~A: Mean Utility]{\begin{centering}
\includegraphics[width=0.28\linewidth]{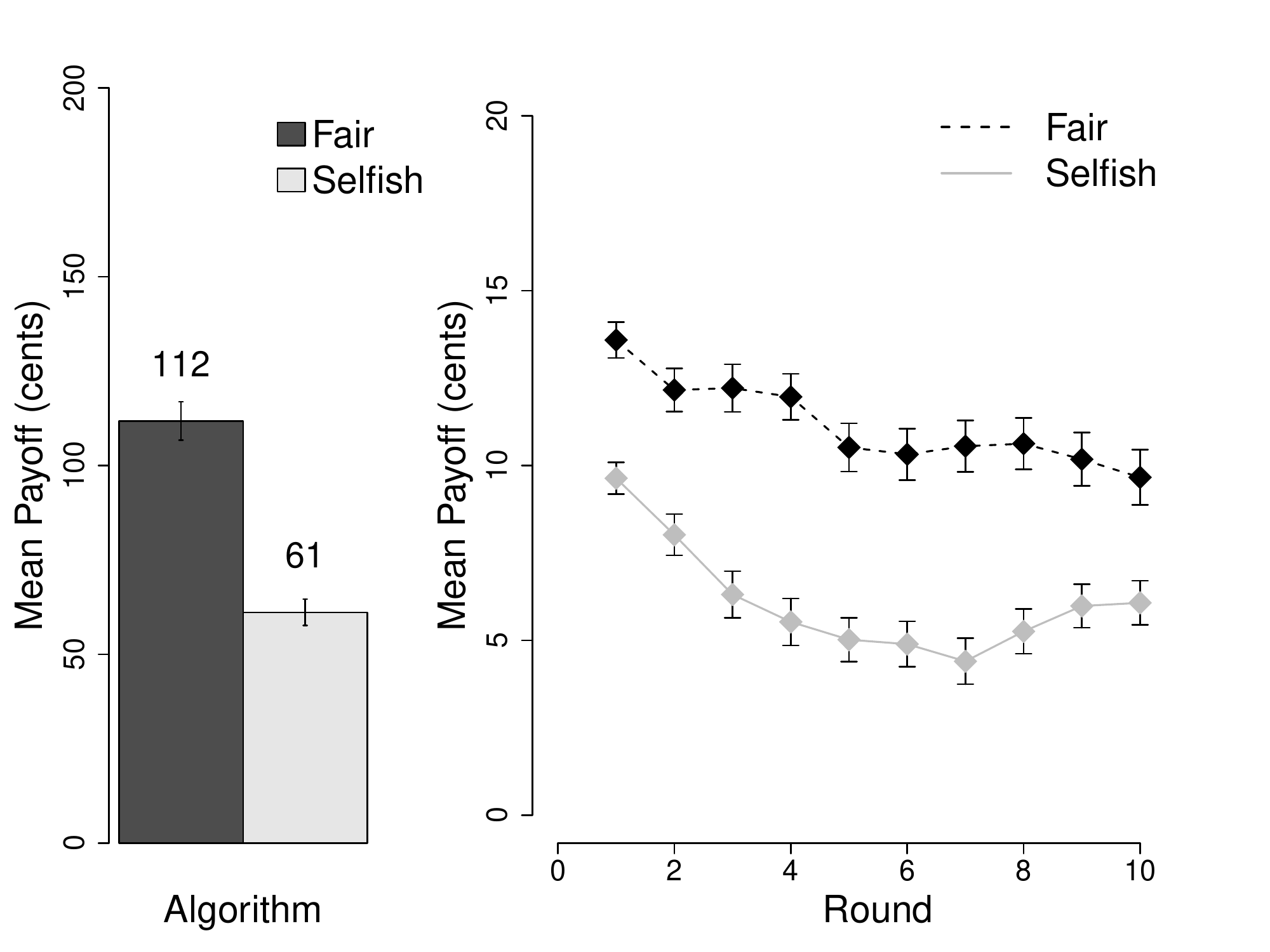}
\par\end{centering}
}\subfloat[\label{fig:Study_B_SW}Study~B: Mean Utility]{\begin{centering}
\includegraphics[width=0.28\linewidth]{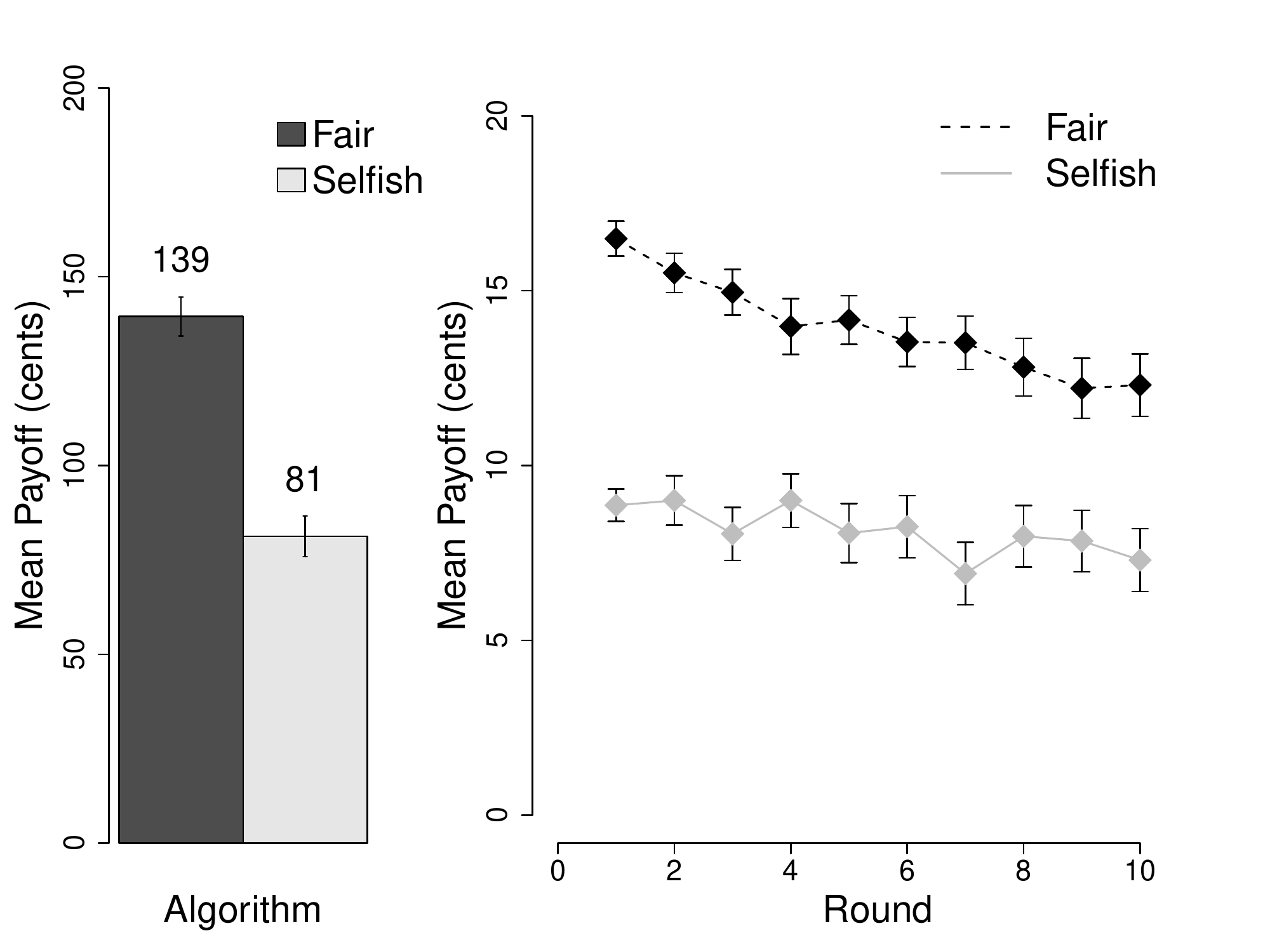}
\par\end{centering}
}\subfloat[\label{fig:Study_C_SW}Study~C: Mean Utility]{\begin{centering}
\includegraphics[width=0.28\linewidth]{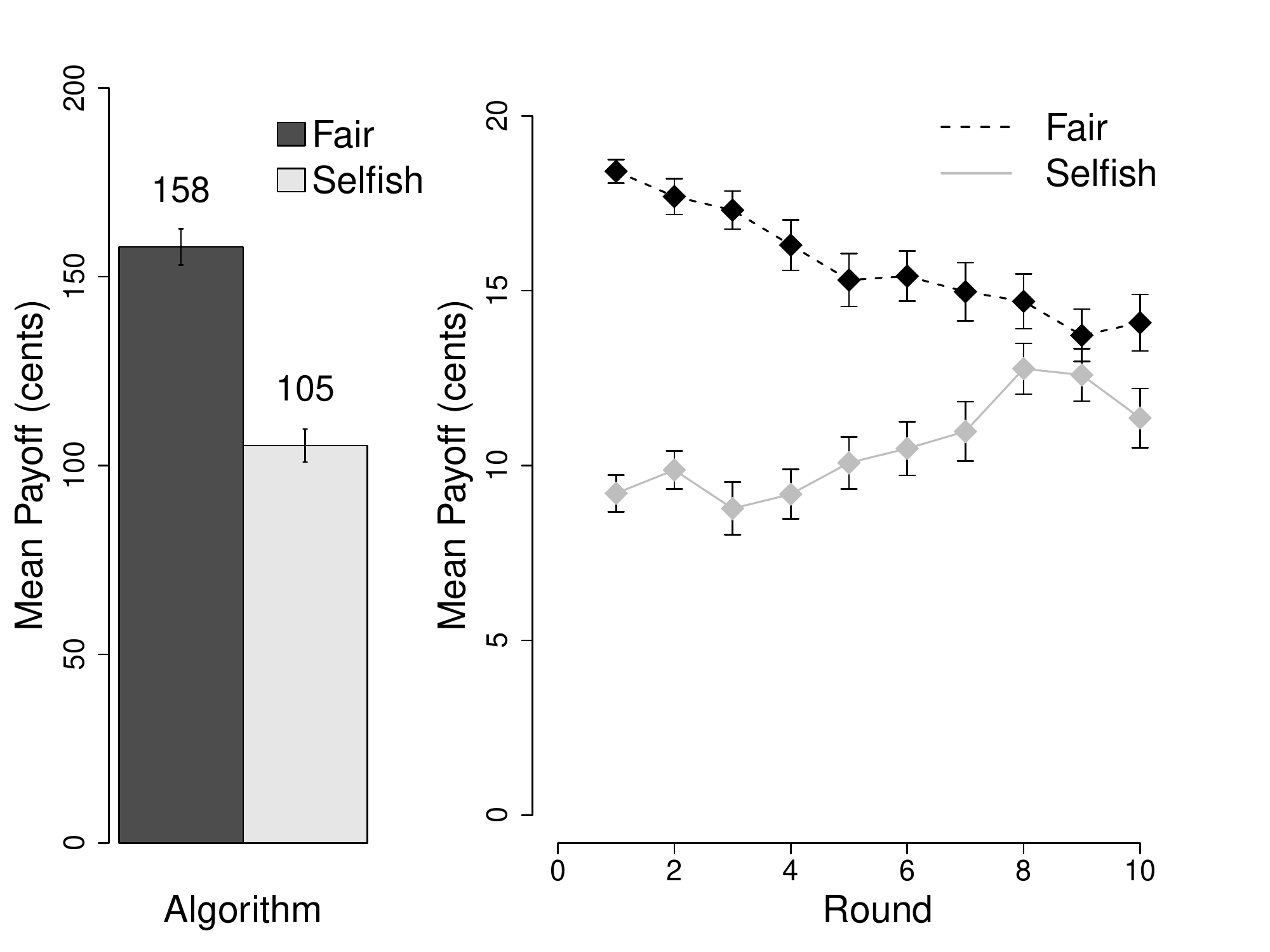}
\par\end{centering}
}
\par\end{centering}
\vspace{-0.45cm}
\begin{centering}
\subfloat[\label{fig:Study_A_eng}Study~A: Engagement Rate]{\begin{centering}
\includegraphics[width=0.28\linewidth]{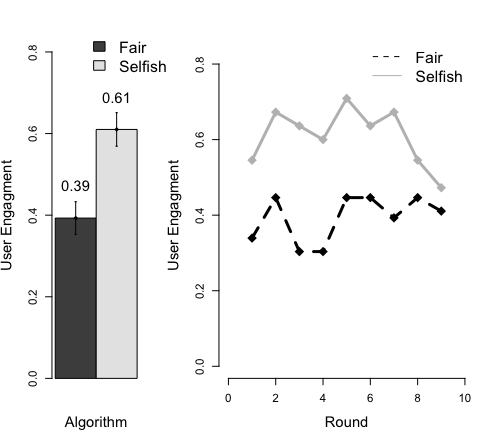}
\par\end{centering}
}\subfloat[\label{fig:Study_B_eng}Study~B: Engagement Rate]{\begin{centering}
\includegraphics[width=0.28\linewidth]{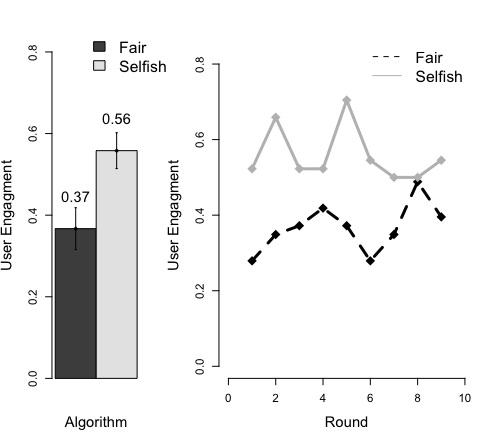}
\par\end{centering}
}\subfloat[\label{fig:Study_C_eng}Study~C: Engagement Rate]{\begin{centering}
\includegraphics[width=0.28\linewidth]{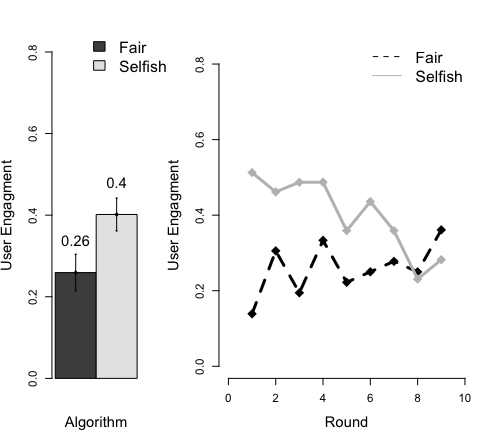}
\par\end{centering}
}
\par\end{centering}

\caption{\label{fig:SW}Figures a-c: In each sub-figure, the left plot shows the mean overall
payment received by participants; the right plot shows the mean payoffs
received per round (standard error bars
are calculated over the mean per groups). Figures d-f: In each sub-figure, the left plot shows the mean overall engagement rate by participants; the right plot shows the engagement rates per round, where engagement rate is defined as the percentage of users requesting to be re-matched.}
\end{figure}


Figure~\ref{fig:SW} depicts the mean payments and engagement rates of participants in each study.
Regarding the payments, the Fair matching algorithm achieves an overall greater utility than the Selfish algorithm. This difference is at most double the utility achieved by the Selfish algorithm, and it decreases across studies as the market becomes less competitive.
Naturally, in the Fair algorithm, payoffs decrease over time given that by definition, each assignment maximizes the social welfare. Thus, any subsequent changes of assignation can only decrease the overall utility, but will not necessarily decrease the payment of an individual participant. 
Focusing on engagement rates, defined as the percentage of users requesting to be re-matched, we find that the Selfish algorithm (whose objective function is to maximize engagement) yields at lest 50\% higher rates than the Fair matching. 
 Figure S8 shows the average PoA over pairs of instances whose initial conditions are the same. We note that for some rounds the maximum was greater than 1, meaning that the Selfish algorithm performed better than the Fair algorithm in some cases. 


Overall, our main findings are as follows:
As the markets become less competitive, the loss of social welfare generated by the Selfish matching algorithm decreases, while the gain in engagement rates across studies remains constant. At the same time, the Selfish algorithm does not induce any significant change over the rate of users taking the outside option compared to the Fair algorithm (See Figure S9).
In the extreme competitive case, Study~A, the mean utility of users under the Selfish algorithm was exactly what they could have gotten as an outside payment had they decided not to play. This is a surprising result given that the algorithm was totally agnostic about the outside payment and its value. Rather, the algorithm learns, based on the implicit feedback given by users' actions, to maximize the probability that a user clicks again. Through this feedback loop, the algorithm self regulates the system to keep payments near but above the baseline to keep engagement high at a minimum cost to the system. Thus, suggesting that users' implicit expectations of their potential utility, both in the system and out, drive the Selfish matching algorithm to perform in different ways.

\section{Related Work} 
Closely related to our question is the concern of inefficiencies that can arise due to the behavior of users in online dating platforms. 
Kanoria and Saban~\cite{kanoria2017facilitating} present a two-sided matching market, where strategic agents on both sides are able to screen potential partners, and a match results only upon approval by both sides of the market. The authors characterize the equilibria as a function of the screening costs that agents in both sides of the market incur when looking for matches. Then, they show that simple interventions such as to block one of the sides from screening, leading to a one-sided search, can improve the social welfare. There is empirical evidence of such asymmetries when screening for partners, for example, women are much more selective than men in {\em Tinder}~\cite{tyson2016first}. Empirical studies in online dating sites are also looking to estimate mate preferences, and have found that both women and men have a strong preference for similarity along many attributes~\cite{hitsch2010makes}.
Using the estimated mate preferences based on a particular online dating site, one study shows that the online dating site achieves near optimal matching compared with the optimal matching predicted by the Gale-Shapley algorithm in a two-sided market~\cite{hitsch2010matching}.

\section{Discussion and Open Problems}
Our results shed new light on how self-interest behavior by online dating sites impacts the social welfare of its users. We introduce and study a model, based on the classic matching problem on graphs, to represent self-interest behavior by dating sites.  We quantify the social efficiency of dating sites using the notion of Price-of-Anarchy to capture how much the system degrades users' utility due to selfish behavior of the agent. We establish theoretical bounds that do not depend on the number of users for the Price of Anarchy, regardless of whether the system knows beforehand all users or if they arrive sequentially (See SI Section S3). Further, we put to experimental test our model, and create a matching market with human subjects to compare the social welfare achieved by an optimal matching service against a selfish matching algorithm. 

Overall, we have optimistic results suggesting that the dating apocalypse\footnote{Tinder and the Dawn of the ``Dating Apocalypse'', Vanity Fair, 2015} is not here, yet. Indeed, a recent study suggests that online couples have lower odds of getting married than offline couples \cite{paul2014online}, however, conditioned on being married, online married couples are slightly less likely to result in marital break-ups~\cite{Cacioppo2013}.
This study is a first step towards understanding the impact of algorithms on society, and opens up several problems and challenges. Our daily lives are increasingly being affected by black-box systems, and we advocate the need for more regulation of these systems. It is our responsibility as researchers to understand, quantify and inform policy makers of the possible---intentional or otherwise---side effects of algorithms.



\bibliography{online-dating}
\bibliographystyle{plain}

\newtheorem{innercustomgeneric}{\customgenericname}
\providecommand{\customgenericname}{}
\newcommand{\newcustomtheorem}[2]{%
  \newenvironment{#1}[1]
  {%
   \renewcommand\customgenericname{#2}%
   \renewcommand\theinnercustomgeneric{##1}%
   \innercustomgeneric
  }
  {\endinnercustomgeneric}
}

\newcustomtheorem{customthm}{Theorem}
\newcustomtheorem{customlemma}{Lemma}
\newcustomtheorem{customexample}{Example}
\newcustomtheorem{customobservation}{Observation}

\setcounter{figure}{0}
 
\renewcommand{\thesection}{S\arabic{section}}   
\renewcommand{\thefigure}{S\arabic{figure}}
\renewcommand{\thetable}{S\arabic{table}}

\section*{Supplementary Information} 

\subsection*{Supplementary Proofs}

\begin{customobservation}{\ref{obs:1}}
$\pi_{i,1}(u_i)$ satisfies (A1), (A2) and (A3) (with $q_i$ replaced by $\pi_{i,1}$).
\end{customobservation}
\begin{proof}
(A1) and (A2) are immediate from \raf{eq:pi}. To see that $\pi_{i,1}(u_i)$ is concave in $u_i\in[0,1]$, note that its second derivative is given by
\begin{eqnarray*}
\pi_{i,1}''(u_i)&=&\frac{(1+q_i(u_i))q_i''(u_i)-2(q_i'(u_i))^2}{(1+q_i(u_i))^3},
\end{eqnarray*}
which is strictly negative for $u_i\in[0,1]$ by  the strict concavity assumption on $q_i(\cdot)$.\qed
\end{proof}

\vspace{20pt}

\begin{customthm}{\ref{t1}}
Let $h(\bq):=\min_{i\in\cM}\{q_i'(0)\}>0$, $H(\bq):=\max_{i\in\cM}\{q_i'(0)\}$, and $c\in(0,h(\bq))$ be the unique solution of the equation 
\begin{equation}\label{eq:c}
c=\frac{H(\bq)}{2}\cdot L(\bq,c),
\end{equation}
where $L(\bq,c)=\min_{i\in\cM}\overline{u}_i(c)$, and where for $i\in\cM$,  $\overline{u}_i(c)\in[0,1]$ denotes the unique positive solution of the equation 
\begin{equation}\label{eq:u}
\pi_{i,1}'(u_i) = \frac{q_i'(u_i)}{(1+q_i(u_i))^2}=c.
\end{equation}
Then $\PoA(\bq)\ge L(\bq,c)/2$.
\end{customthm}
\begin{proof}
First note that the solutions in \raf{eq:c} and \raf{eq:u} are well-defined, since the strict concavity of $q_i(u_i)$ in $[0,1]$ implies that the function $\pi_{i,1}'(u_i)=\frac{q_i'(u_i)}{(1+q_i(u_i))^2}$ is strictly decreasing in $[0,1]$, which in turn implies that the function $L(\bq,c)$ is also strictly decreasing in $c\in[0,h(\bq)]$. It follows that the right-hand side of \raf{eq:c} is strictly decreasing in $c\in[0,h(\bq)]$, and assumes a positve value when $c=0$, and a value of $0$ when $c=h(q)$, implying that the root $c$ in \raf{eq:c} exists.  
  
Fix any optimal solutions $(\bu^*,\bx^*)$ and $(\widehat\bu,\widehat\bx)$ for problems \raf{o-f} and \raf{o-s}, respectively.
Then our to goal is to lower-bound \begin{equation}\frac{\sum_{i\in\cM}\widehat u_i}{\sum_{i\in\cM}u_i^*}.\label{e--}\end{equation}
Fix $c>0$ as given by \raf{eq:c}, and let $\cS(c):=\{i\in\cM:~\pi_{i,1}'(\widehat u_i)>c\}$.
%
If one tries to bound the ratio directly by using the simple (local) lower bound $\min_i\frac{\widehat u_i}{u_i^*}$, one runs into the difficulty that some of the $\widehat u_i$'s could be very close to zero (even though not exactly zero as the function $s(\bu;\bq)$ is strictly concave).  To overcome this, we use the local bound only for users not in $\cS(c)$, while for users in $\cS(c)$ we use a global bound derived from the KKT optimality conditions. This is formalized in the following  two claims.   

\begin{claim}\label{cla1}
Suppose that $i\in \cM\setminus\cS(c)$, then $\widehat u_i\ge \frac{L(\bq,c)}{2}(u_i^*+\widehat u_i)$.
\end{claim}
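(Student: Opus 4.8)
The plan is to exploit the strict monotonicity of $\pi_{i,1}'$ together with the trivial fact that all utilities lie in $[0,1]$. First I would unpack the hypothesis: $i\in\cM\setminus\cS(c)$ means precisely $\pi_{i,1}'(\widehat u_i)\le c$. By Observation~\ref{obs:1}, $\pi_{i,1}$ is strictly concave on $[0,1]$, so its derivative $\pi_{i,1}'$ is strictly decreasing there; and by definition $\overline u_i(c)\in(0,1)$ is the unique point with $\pi_{i,1}'(\overline u_i(c))=c$. Because a strictly decreasing function reverses inequalities, $\pi_{i,1}'(\widehat u_i)\le\pi_{i,1}'(\overline u_i(c))$ forces $\widehat u_i\ge\overline u_i(c)$. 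Intuitively, a user outside $\cS(c)$ cannot have received a small selfish utility, since small utility would push the marginal return $\pi_{i,1}'(\widehat u_i)$ above $c$ and place $i$ inside $\cS(c)$.

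Next I would pass from this per-user bound to the uniform one: since $\overline u_i(c)\ge\min_{j\in\cM}\overline u_j(c)=L(\bq,c)$ by definition of $L(\bq,c)$, we get $\widehat u_i\ge L(\bq,c)$. Finally, since both $u_i^*$ and $\widehat u_i$ are utilities constrained to $[0,1]$ by~\raf{e1}--\raf{e4}, we have $u_i^*+\widehat u_i\le 2$, and hence $\tfrac{L(\bq,c)}{2}(u_i^*+\widehat u_i)\le L(\bq,c)\le\widehat u_i$, which is exactly the claimed inequality.

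I expect essentially no serious obstacle for this particular claim: it reduces to the elementary observation just described, and the two points that need care are already secured by the preamble of the proof. Namely, the well-definedness of $\overline u_i(c)\in(0,1)$ follows because $c<h(\bq)\le q_i'(0)=\pi_{i,1}'(0)$ while $\pi_{i,1}'(1)=q_i'(1)<0<c$, so $\pi_{i,1}'(u)=c$ has a unique root in $(0,1)$; and the direction of the monotonicity step is justified by the same strict concavity. The genuinely delicate regime is the complementary one — users in $\cS(c)$, for whom $\widehat u_i$ can be arbitrarily close to zero and the local bound collapses — which is precisely why the index set is split at the threshold $c$ and why that case is handled separately via the global KKT-based estimate.
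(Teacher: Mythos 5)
Your argument is correct and is essentially the paper's own proof, just spelled out: the paper's one-line justification is exactly the chain $\pi_{i,1}'(\widehat u_i)\le c \Rightarrow \widehat u_i\ge \overline u_i(c)\ge L(\bq,c)\ge L(\bq,c)(u_i^*+\widehat u_i)/2$, relying on the strict monotonicity of $\pi_{i,1}'$ (from concavity of $q_i$) and the trivial bound $u_i^*+\widehat u_i\le 2$. Your added remarks on the well-definedness of $\overline u_i(c)$ and on why the set $\cS(c)$ must be treated separately match what the paper handles in the preamble of the theorem's proof and in Claim~\ref{cla2}, respectively.
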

\begin{proof}
Immediate from the definitions since $\pi'_{i,1}(\widehat u_i)\le c$ implies, by the concavity of $q_i$, that $\widehat u_i \ge \overline{u}_i(c)\ge L(\bq,c)\ge L(\bq,c)(u_i^*+\widehat u_i)/2$. 
\qed
\end{proof}

\begin{claim}\label{cla2}
$\sum_{i\in\cM}u_i^*\le \frac{1}{c}\sum_{i\in \cS(c),~j\in\cW}\pi_{i,1}'(\widehat u_i)w_{ij}\widehat x_{ij}+\sum_{i\in \cM\setminus \cS(c)}(u_i^*+\widehat u_i).$
\end{claim}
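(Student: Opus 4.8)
The plan is to read off the inequality from the first-order (KKT) optimality conditions of the selfish program $\raf{o-s}$. This is legitimate because $\pi_{i,1}$ is concave (Observation~\ref{obs:1}) and each $u_i=\sum_j w_{ij}x_{ij}$ is linear, so $\raf{o-s}$ is a concave maximization over the polyhedral matching region $\raf{e2}$--$\raf{e4}$, and the KKT conditions hold at $(\widehat\bu,\widehat\bx)$ without extra constraint qualification (affine constraints). Introducing multipliers $\alpha_i\ge 0$ for the row constraints $\raf{e2}$, $\beta_j\ge 0$ for the column constraints $\raf{e3}$, and $\gamma_{ij}\ge 0$ for $\raf{e4}$, stationarity reads $\pi_{i,1}'(\widehat u_i)\,w_{ij}=\alpha_i+\beta_j-\gamma_{ij}$, which since $\gamma_{ij}\ge 0$ gives the global inequality
\[
\pi_{i,1}'(\widehat u_i)\,w_{ij}\le \alpha_i+\beta_j\qquad\text{for all }i\in\cM,\ j\in\cW,
\]
with equality whenever $\widehat x_{ij}>0$ (complementary slackness $\gamma_{ij}\widehat x_{ij}=0$).

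Next I would record the dual identity. Complementary slackness on the capacity constraints ($\alpha_i>0\Rightarrow\sum_j\widehat x_{ij}=1$, and $\beta_j>0\Rightarrow\sum_i\widehat x_{ij}=1$) together with the equality case above on the support of $\widehat\bx$ gives
\[
\sum_{i\in\cM}\alpha_i+\sum_{j\in\cW}\beta_j=\sum_{i,j}(\alpha_i+\beta_j)\widehat x_{ij}=\sum_{i,j}\pi_{i,1}'(\widehat u_i)w_{ij}\widehat x_{ij}=\sum_{i\in\cM}\pi_{i,1}'(\widehat u_i)\widehat u_i.
\]
This is the key step that turns the dual objective into the linearization of the selfish objective.

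The core estimate then uses the definition of $\cS(c)$. For $i\in\cS(c)$ we have $\pi_{i,1}'(\widehat u_i)>c$, so $c\,u_i^*\le \pi_{i,1}'(\widehat u_i)u_i^*=\sum_j\pi_{i,1}'(\widehat u_i)w_{ij}x_{ij}^*\le\sum_j(\alpha_i+\beta_j)x_{ij}^*$ by the global inequality and $x_{ij}^*\ge 0$. Summing over $i\in\cS(c)$, enlarging the (nonnegative) sum to all of $\cM$, and invoking feasibility $\sum_j x_{ij}^*\le 1$, $\sum_i x_{ij}^*\le 1$ of the fair solution yields
\[
c\sum_{i\in\cS(c)}u_i^*\le\sum_{i,j}(\alpha_i+\beta_j)x_{ij}^*\le\sum_{i\in\cM}\alpha_i+\sum_{j\in\cW}\beta_j=\sum_{i\in\cM}\pi_{i,1}'(\widehat u_i)\widehat u_i.
\]

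Finally I would split the right-hand side at $\cS(c)$ and absorb the off-$\cS(c)$ part. For $i\notin\cS(c)$ we have $\pi_{i,1}'(\widehat u_i)\le c$, hence $\pi_{i,1}'(\widehat u_i)\widehat u_i\le c\,\widehat u_i$; substituting and dividing by $c$ converts the last display into $\sum_{i\in\cS(c)}u_i^*\le\frac1c\sum_{i\in\cS(c)}\pi_{i,1}'(\widehat u_i)\widehat u_i+\sum_{i\notin\cS(c)}\widehat u_i$. Adding $\sum_{i\notin\cS(c)}u_i^*$ to both sides and rewriting $\pi_{i,1}'(\widehat u_i)\widehat u_i=\sum_j\pi_{i,1}'(\widehat u_i)w_{ij}\widehat x_{ij}$ produces exactly the claimed inequality. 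The main obstacle is precisely this third-into-fourth step: the naive dual bound controls $\sum_{i\in\cS(c)}u_i^*$ only by the \emph{full} sum $\sum_{i\in\cM}\pi_{i,1}'(\widehat u_i)\widehat u_i$, and one must use the defining inequality $\pi_{i,1}'(\widehat u_i)\le c$ off $\cS(c)$ to peel the unwanted terms off as $c\,\widehat u_i$ --- which are exactly what the $\sum_{i\notin\cS(c)}(u_i^*+\widehat u_i)$ term on the right is designed to swallow.
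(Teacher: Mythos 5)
Your proof is correct: the stationarity inequality $\pi_{i,1}'(\widehat u_i)w_{ij}\le\alpha_i+\beta_j$ (with equality on the support of $\widehat\bx$), the identity $\sum_i\alpha_i+\sum_j\beta_j=\sum_{i,j}\pi_{i,1}'(\widehat u_i)w_{ij}\widehat x_{ij}=\sum_i\pi_{i,1}'(\widehat u_i)\widehat u_i$ obtained from complementary slackness on both the capacity constraints and on $\widehat x_{ij}>0$, and the final peeling step using $\pi_{i,1}'(\widehat u_i)\le c$ off $\cS(c)$ all check out, and together they yield exactly the claimed inequality. Your route is built on the same KKT foundation as the paper's (your multipliers $\alpha_i,\beta_j,\gamma_{ij}$ are its $\widehat\beta_i,\widehat\sigma_j,\widehat\mu_{ij}$, and your identity is its \raf{kkt-e1-}), but it diverges meaningfully in the middle: the paper defines modified weights $\widetilde\bw$ by zeroing the rows outside $\cS(c)$, introduces the auxiliary LP $O_f(\widetilde\bw)$ and its explicit dual \raf{dual}, certifies that the scaled multipliers $(\widehat\beta/c,\widehat\sigma/c)$ are dual feasible, and then invokes strong LP duality together with the comparison \raf{eq1} between $\bu^*$ and the auxiliary optimum $\widetilde\bu$. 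You bypass the auxiliary LP, the intermediate solution $\widetilde\bu$, and strong duality altogether: pairing the fair optimal $\bx^*$ directly against the multipliers is an inlined weak-duality computation, which is more elementary and self-contained---only nonnegativity and feasibility of $\bx^*$ are needed. What the paper's heavier packaging buys is reuse: in the online extension (Theorem~\ref{t2} in the SI), the same explicit dual $D_f(\widetilde\bw)$ is certified feasible again with the greedy KKT multipliers $(\widehat\beta_i/c,\widehat\sigma_{1,j}/c)$ and the chain \raf{eq2}, \raf{eq1}, \raf{eq--} is invoked verbatim, whereas your inlined argument would need to be redone there (it would go through, since the online KKT system delivers the analogous inequality and a one-sided version of your identity, but it is not off-the-shelf). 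A small bonus of your write-up: the collapse of the double sum to $\sum_i\pi_{i,1}'(\widehat u_i)\widehat u_i$ is a tidy simplification the paper never states explicitly.
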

\begin{proof}
Consider problem \raf{o-s}. By the KKT (necessary) conditions for optimality (which are also sufficient since the functions $\pi_{i,1}(\cdot)$ are concave by Observation~\raf{ob1}), there exist $\widehat\beta_i\ge0$, $\widehat\sigma_j \ge 0$, and $\widehat\mu_{ij}\ge 0$, for $i\in \cM$ and $j\in\cW$, such that \raf{e1}-\raf{e4} hold for $(\bu,\bx)=(\widehat\bu,\widehat\bx)$, and
\begin{align}
-\pi_{i,1}'(\widehat u_i)w_{ij}+\widehat\beta_i+\widehat\sigma_j&=\widehat\mu_{ij},~~\text{ for all }i\in \cM\text{ and }j\in \cW,\label{kkt-e1}\\
\qquad \widehat\mu_{ij}\widehat x_{ij}&= 0,~~\text{ for all }i\in\cM\text{ and }j\in\cW,\label{kkt-e2}\\
\qquad\widehat\beta_i(\sum_{j\in\cW}\widehat x_{ij}-1)&=0,~~\text{ for all }i\in\cM,\label{kkt-e3}\\
\qquad\widehat\sigma_j(\sum_{i\in\cM}\widehat x_{ij}-1)&=0,~~\text{ for all }j\in\cW.\label{kkt-e4}
\end{align}
Note that \raf{e2}, \raf{e3}, and \raf{kkt-e1}-\raf{kkt-e4} imply that
\begin{align}
\sum_{i\in\cM,~j\in\cW}\pi_{i,1}'(\widehat u_i)w_{ij}\widehat x_{ij}&=\sum_{i\in\cM}\widehat\beta_i+\sum_{j\in\cW}\widehat\sigma_j,\label{kkt-e1-}\\
\qquad \widehat\beta_i+\widehat\sigma_j &\ge \pi_{i,1}'(\widehat u_i)w_{ij},~~\text{ for all }i\in \cM\text{ and }j\in \cW.\label{kkt-e2-}
\end{align}

Let us next define a new set of weights $\widetilde\bw$, obtained from $\bw$ by setting $\widetilde w_{ij}=0$ for all $i\in\cM\setminus\cS(c)$ and $j\in\cW$. Let $(\widetilde\bu,\widetilde\bx)$ be an optimal solution to problem $O_f(\widetilde\bw)$.
Clearly, by the feasibility of $u_i^*$ for \raf{e2} and \raf{e3},
\begin{equation}\label{eq1}
\sum_{i\in\cM}u_i^*\le \sum_{i\in\cS(c)}\widetilde u_i+\sum_{i\in\cM\setminus\cS(c)}u_i^*.
\end{equation}
   
Let us write the dual for LP $O_f(\widetilde\bw)$ (in the variables $\beta:=(\beta_i:~i\in\cS(c))$ and $\sigma:=(\sigma_1,\ldots,\sigma_n)$):
\begin{align}
z^*_f(\widetilde\bw):=&\min_{\beta,\sigma}\sum_{i\in \cS(c)}\beta_i+\sum_{j\in \cW}\sigma_j \label{dual}\tag{$D_f(\widetilde\bw)$}\\
\text{s.t.}& \qquad \beta_i+\sigma_j \ge w_{ij},~~\text{ for all }i\in \cS(c)\text{ and }j\in \cW,\label{d-e1}\\
&\qquad \beta_i\ge 0,~\sigma_{j}\ge 0,~~\text{ for all } i\in\cS(c) \text{ and } j\in \cW.\label{d-e2}
\end{align}
From \raf{kkt-e2-}, we have for $i\in\cS(c)$ and $j\in \cW$: $\widehat\beta_i+\widehat \sigma_j\ge c \cdot w_{ij}$. Thus, the pair of vectors $(\beta',\sigma'):=((\frac{\widehat\beta_i}{c}:~i\in\cS(c)),(\frac{\widehat\sigma_j}{c}:~j\in\cW))$ is feasible for the dual LP $O_f(\widetilde\bw)$. It follows that 
\begin{eqnarray}
\sum_{i\in\cS(c)}\widetilde u_i&=&z^*_f(\widetilde\bw)~~~~~~~~~~\text{(by LP duality)}\nonumber\\
&\le& \frac{1}{c}\left(\sum_{i\in \cS(c)}\widehat\beta_i+\sum_{j\in \cW}\widehat\sigma_j\right)~~~\text{ (by the feasibility of $(\beta',\sigma')$ for ($D_f(\widetilde\bw)$))}\nonumber\\
&\le&\frac{1}{c}\left(\sum_{i\in\cS(c),~j\in\cW}\pi_{i,1}'(\widehat u_i)w_{ij}\widehat x_{ij}+\sum_{i\in\cM\setminus\cS(c),~j\in\cW}\pi_{i,1}'(\widehat u_i)w_{ij}\widehat x_{ij}\right)~~~\text{ (by \raf{kkt-e1-})}\nonumber\\
&\le&\frac{1}{c}\left(\sum_{i\in\cS(c),~j\in\cW}\pi_{i,1}'(\widehat u_i)w_{ij}\widehat x_{ij}+c\sum_{i\in\cM\setminus\cS(c)}\sum_{j\in\cW}w_{ij}\widehat x_{ij}\right)~~~\text{(by definition of $\cS(c)$)}\nonumber\\
&=&\frac{1}{c}\left(\sum_{i\in\cS(c),~j\in\cW}\pi_{i,1}'(\widehat u_i)w_{ij}\widehat x_{ij}+c\sum_{i\in\cM\setminus\cS(c)}\widehat u_i\right)~~~\text{(by feasibility of $\widehat\bx$ for \raf{o-f}).}\nonumber\\
\label{eq2}
\end{eqnarray}  
Using \raf{eq2} in \raf{eq1}, we arrive at the claim.\qed
\end{proof}

Finally, using the above two claims, we get
\begin{eqnarray}
\frac{\sum_{i\in\cM}\widehat u_i}{\sum_{i\in\cM}u_i^*}&\ge&\frac{\sum_{i\in\cS(c),~j\in\cW}w_{ij}\widehat x_{ij}+\sum_{i\in\cM\setminus\cS(c)}\widehat u_i}{\frac{1}{c}\sum_{i\in\cS(c),~j\in\cW}\pi_{i,1}'(\widehat u_i)w_{ij}\widehat x_{ij}+\sum_{i\in\cM\setminus\cS(c)}(u_i^*+\widehat u_i)}\nonumber\\
&\ge&\min\left\{c\cdot\min_{i\in\cS(c)}\frac{1}{\pi_{i,1}'(\widehat u_i)},\min_{i\in\cM\setminus\cS(c)}\frac{\widehat u_i}{u_i^*+\widehat u_i}\right\}\nonumber\\
&\ge&\min\left\{c\cdot\frac{1}{\max_{i\in\cS(c)}\pi_{i,1}'(0)},\frac{L(\bq,c)}{2}\right\}\label{eq-}\\
&\ge&\min\left\{\frac{c}{H(\bq)},\frac{L(\bq,c)}{2}\right\}=\frac{L(\bq,c)}{2},\label{eq--}
\end{eqnarray}
where \raf{eq-} follows from the concavity of $q_i$  and Claim~\ref{cla1}; \raf{eq--} follows from \raf{eq:c}. The theorem follows.\qed
\end{proof}

\newpage
\begin{customlemma}{\ref{lem:1}}
$\argmax_{u_i}\pi_{i,1}(u_i)=1$,
as $\epsilon \rightarrow 0$.   
\end{customlemma}
\begin{proof}
To derive the above result, we show that the global optimum $u^*_i$ of $\pi_{i,1}(u_i)$ is the unique solution satisfying,
$$
\pi_{i,1}'(u^*_i) = \frac{q_i'(u^*_i)+ \frac{q_i^2(u^*_i)}{\epsilon}}{\left(1+q_i(u^*_i)+\frac{q_i(u^*_i)}{\epsilon}(1-u^*_i)\right)^2}=0,
$$
or equivalently, 
\begin{equation}\label{eq:epsilon}
\epsilon = -\frac{q_i^2(u^*_i)}{q_i'(u^*_i)}.
\end{equation}
Given that $\epsilon>0$, Equation \raf{eq:epsilon} holds only if $q_i'(u^*_i)<0$. This in turn means that $u^*_i \in [u_i',1]$, where $u_i'$ is the solution to $q_i'(u_i')=0$. It follows from the concavity of $q$ that $-\frac{q_i^2(u_i)}{q_i'(u_i)}$ is decreasing in the interval $u_i\in [u_i',1]$. Therefore, $u^*_i$ must be unique. Furthermore, as $\epsilon \rightarrow 0$, we have that $q_i^2(u_i^*) \rightarrow 0$,  or equivalently,  $u_i^* \rightarrow 1$.
\qed
\end{proof}

\vspace{20pt}
\begin{figure}[t]
\begin{centering}
\includegraphics[width=0.4\columnwidth]{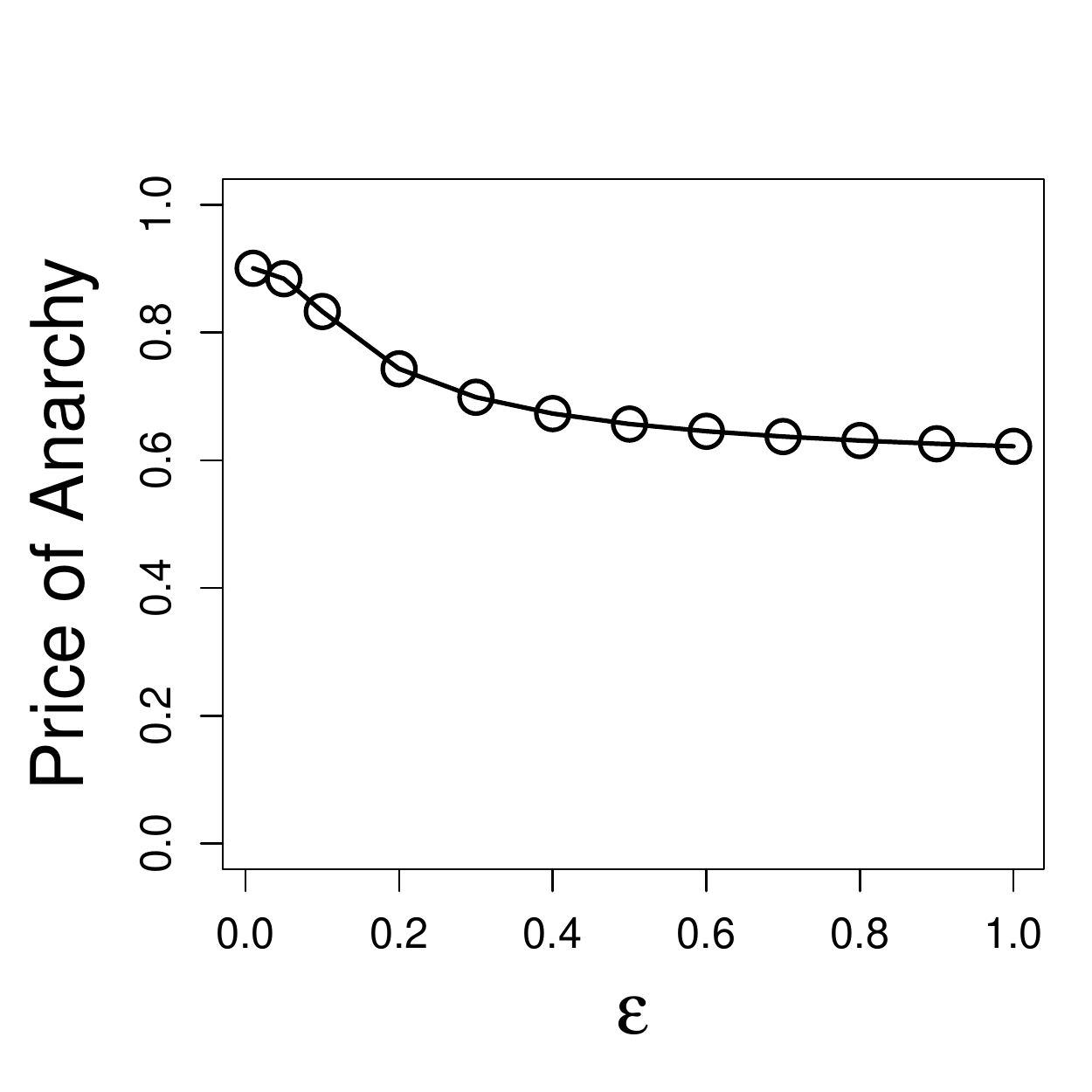}
\par\end{centering}
\caption{\label{fig:example2} The figure shows the empirical Price of Anarchy for probability function $q_i(u):=u(1-u)$, for different $\epsilon$'s. The empirical Price of Anarchy is obtained as the minimum ratio of utility between the selfish and optimal matching, across several instances randomizing the preference weights.  }
\end{figure}
\begin{customexample}{S1}\label{ex:2}
 We validate our findings with a numerical example. Figure~\ref{fig:example2} depicts the convergence predicted for the probability function $q_i(u):=u(1-u)$ using empirical results. The empirical Price of Anarchy is obtained as the minimum ratio of utility between the selfish and optimal matching, across several instances randomizing the preference weights. That is, for each instance we generate a set of random weights $\bw$ sampled from a Beta distribution (with parameters $(2,2)$). Then, we solve the optimal matching $O_f(\bw)$, the selfish matching $O_s(\bw;\bq)$ and obtain the ratio of their utilities. 
\end{customexample}


\newpage

\section*{Supplementary Text - Bounds on $\PoA$ in the Online Model}

As in the previous sections, we are interested in quantifying the efficiency of assignments for the online selfish matching problem \raf{o-s} with respect to assignments for the fair matching problem \raf{o-f}. Inspired by the notion of competitive ratio, we extend the definition of price of anarchy for the online model as follows.
\begin{definition}[Price of Anarchy for the online model]\label{PoA2}
Let $\cA$ be an online matching algorithm for solving the selfish matching problem \raf{o-s} and $s(\bu_s^{\cA};\bq)$ be the objective value of the online strategy.
Let $u^*_f\in\argmax\{f(\bu):\bu\text{ satisfies \raf{e1}-\raf{e4}}\}$.
Define the {\it price of anarchy} $\PoA^\cA=\PoA^\cA(\bq)\in[0,1]$, (from the point of view of the users) with respect to algorithm $\cA$, as:
$$
\PoA^\cA(\bq):=\min_{\bw\ge \bzero}\frac{f(\bu^{\cA}_s)}{f(\bu^*_f)}.
$$
\end{definition}

Next, we analyze the $\PoA$ of the greedy policy for the selfish (probabilistic) matching problem. The greedy algorithm, upon arrival of element $i\in \cM$, decides the (partial) distribution $\bx_i:=(x_{ij}:~i\in\cM,~j\in\cW,~\sum_{j\in\cW}x_{ij}\le 1)$ that maximizes $\pi_{i,1}(u_i)$, and assigns an element $j\in \cW$ to $i$ with probability $x_{ij}$ (with probability $1-\sum_{j\in\cW}x_{ij}$ user $i$ is not assigned); the agent may also be required to ensure that $\sum_{i\in\cM}x_{ij}\le 1$ for all $j\in\cW$.

\begin{theorem}\label{t2}
Let $\cA$ be the greedy policy for the selfish (probabilistic) matching problem. Then, 
$$ \PoA^\cA(\bq)\ge L(\bq,c)/2, $$
where $L(\bq,c)$ is defined as in Theorem \ref{t1}.
\end{theorem}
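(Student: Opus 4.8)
The plan is to rerun the argument of Theorem~\ref{t1} verbatim, now taking $(\widehat\bu,\widehat\bx)$ to be the utilities and fractional assignment produced by the greedy policy $\cA$, keeping the fair optimum $(\bu^*,\bx^*)$, and keeping the splitting set $\cS(c)=\{i\in\cM:\pi_{i,1}'(\widehat u_i)>c\}$. The point is that Claim~\ref{cla1} transfers with no change: its proof never used optimality of $\widehat\bx$, only that $\pi_{i,1}'$ is strictly decreasing and that $\pi_{i,1}'(\widehat u_i)\le c$ forces $\widehat u_i\ge\overline u_i(c)\ge L(\bq,c)$, which is a statement about the number $\widehat u_i$ alone. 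Hence the whole burden falls on re-establishing Claim~\ref{cla2}, namely
\[
\sum_{i\in\cM}u_i^*\le\frac1c\sum_{i\in\cS(c),\,j\in\cW}\pi_{i,1}'(\widehat u_i)w_{ij}\widehat x_{ij}+\sum_{i\in\cM\setminus\cS(c)}(u_i^*+\widehat u_i),
\]
after which the final chain \raf{eq-}--\raf{eq--} closes exactly as in the offline theorem, since it only ever invokes \raf{kkt-e1-}, \raf{kkt-e2-}, and Claim~\ref{cla1}.

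In the offline proof Claim~\ref{cla2} was obtained by scaling the \emph{global} KKT multipliers of $O_s(\bw;\bq)$ by $1/c$ and feeding them into the dual $D_f(\widetilde\bw)$. Greedy has no global multipliers, so first I would harvest certificates from each arrival step. When man $i$ arrives, $\cA$ solves the single-agent concave program $\max\{\pi_{i,1}(\sum_j w_{ij}x_{ij}):\sum_j x_{ij}\le1,\ x_{ij}\le r_{ij},\ x_{ij}\ge0\}$, where $r_{ij}$ is the capacity of $j$ remaining on $i$'s arrival; its KKT conditions yield a scalar $\widehat\beta_i\ge0$ (for $\sum_j x_{ij}\le1$) and capacity multipliers, giving $\pi_{i,1}'(\widehat u_i)w_{ij}\le\widehat\beta_i$ for every woman $j$ still available to $i$. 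I would then assign each woman a price $\widehat\sigma_j\ge0$ built from the marginal values $\pi_{i',1}'(\widehat u_{i'})w_{i'j}$ of the men $i'$ who consumed her, and attempt to verify the two facts that power Claim~\ref{cla2}: dual feasibility $\widehat\beta_i+\widehat\sigma_j\ge\pi_{i,1}'(\widehat u_i)w_{ij}$ for $i\in\cS(c)$ and all $j$ (which, since $\pi_{i,1}'(\widehat u_i)>c$, yields the needed $\widehat\beta_i+\widehat\sigma_j\ge c\,w_{ij}$), and the budget bound $\sum_i\widehat\beta_i+\sum_j\widehat\sigma_j\le\sum_{i,j}\pi_{i,1}'(\widehat u_i)w_{ij}\widehat x_{ij}$. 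With these in hand the LP-duality computation \raf{eq1}--\raf{eq2} carries through word for word, the offline equalities merely weakened to inequalities in the direction we use.

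The main obstacle is precisely the woman-pricing step, which is where onlineness bites. For a woman available when $i$ arrives, local optimality already gives $\pi_{i,1}'(\widehat u_i)w_{ij}\le\widehat\beta_i$, so feasibility there is free. The danger is a woman $j$ fully consumed by \emph{earlier} men: her price $\widehat\sigma_j$ must dominate $\pi_{i,1}'(\widehat u_i)w_{ij}-\widehat\beta_i$ for every later hungry $i\in\cS(c)$ that wanted her, yet $\widehat\sigma_j$ is funded only by the possibly much smaller marginal values of the men who actually took her, and since distinct men rate the same woman by incomparable weights, an early low-value grab can block a later high-value demand and break the budget bound under an adversarial order. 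I would therefore expect the argument to go through only once the consuming men's marginal values are shown to dominate later demands; concretely I would exploit that for $i\in\cS(c)$ the greedy selfish rule coincides with myopically maximizing $i$'s own utility (as such $i$ lies strictly below the peak of $\pi_{i,1}$), isolate the resulting monotonicity of consumed-woman values as the single lemma needed, and then reduce the online bound to the offline one through it.
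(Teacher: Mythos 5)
Your skeleton is exactly the paper's: the paper also keeps the splitting set $\cS(c)$, reuses Claim~\ref{cla1} unchanged (you are right that it needs only the strict monotonicity of $\pi_{i,1}'$ and $u_i^*+\widehat u_i\le 2$, not optimality of $\widehat\bx$), and re-derives Claim~\ref{cla2} from per-arrival KKT certificates. Concretely, the paper extracts from each greedy subproblem a scalar $\widehat\beta_i\ge 0$ and time-indexed residual-capacity multipliers $\widehat\sigma_{i,j}\ge 0$, obtains your ``budget bound'' by summing complementary slackness \raf{kkt-eo4} over arrivals (giving $\sum_i\widehat\sigma_{i,j}\widehat x_{ij}\ge\widehat\sigma_{1,j}$ and hence \raf{kkt-eo1-}), takes the woman prices to be the first-arrival multipliers $\widehat\sigma_{1,j}$, asserts the dual-feasibility inequality \raf{kkt-eo2-} ($\widehat\beta_i+\widehat\sigma_{1,j}\ge\pi_{i,1}'(\widehat u_i)w_{ij}$ for all $i,j$), scales by $1/c$, and closes via \raf{eq1}, \raf{eq2} and \raf{eq--} as in the offline case.

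The gap is that you never establish the pricing step: your proposal ends by conjecturing a ``monotonicity of consumed-woman values'' lemma, and that lemma is false as stated -- your own adversarial scenario defeats it. Take one woman $j$ and two men with $w_{1j}=\delta$ small, $w_{2j}=1$, man $1$ arriving first, and $q_i(u)=u(1-u)$. Since $\delta$ lies strictly below the peak of $\pi_{1,1}$, greedy sets $\widehat x_{1j}=1$, so any price for $j$ funded by the marginal value of the man who consumed her is $O(\delta)$; man $2$ then has $\widehat u_2=0$ and $\pi_{2,1}'(0)=q_2'(0)$, while $\widehat\beta_2=0$ because his own constraint $\sum_j x_{2j}\le 1$ is slack, so your feasibility inequality and your budget bound cannot both hold. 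Your proposed repair -- that for $i\in\cS(c)$ greedy coincides with myopic utility maximization -- does not restore domination, since man $1$ is also strictly below his peak and still grabs $j$. Note, moreover, that the per-step KKT conditions only yield $\widehat\beta_i+\widehat\sigma_{i,j}\ge\pi_{i,1}'(\widehat u_i)w_{ij}$ with an $i$-dependent price; the passage to the $i$-independent price $\widehat\sigma_{1,j}$ in \raf{kkt-eo2-} is precisely the step your obstacle targets, and the paper supplies no argument for it (in the instance above $\widehat\sigma_{1,j}\le\pi_{1,1}'(\delta)\,\delta$ while $\pi_{2,1}'(0)w_{2j}=q_2'(0)$, so \raf{kkt-eo2-} fails, and the greedy-to-fair welfare ratio is $\delta$, below $L(\bq,c)/2\approx 0.18$). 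So your diagnosis of where online-ness bites is accurate -- indeed it exposes the thin point of the paper's own argument -- but as a proof the proposal has an unfilled hole at the theorem's crux, and closing it appears to require additional hypotheses (on the arrival order, or on the heterogeneity of the weights) rather than the lemma you propose.
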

\begin{proof}
Let $i\in \cM$ be the $i$th requested element in the online algorithm. Upon arrival of element $i\in \cM$, the greedy algorithm solves:
\begin{align}
&\max_{\bx_i}\pi_{i,1}(u_i)  \\
\text{s.t.}& \qquad u_i:=\sum_j w_{ij}x_{ij},\label{eo1}\\
&\qquad\sum_{j\in\cW}x_{ij}\le 1,\label{eo2}\\
&\qquad\sum_{k\leq i}x_{kj}\le 1,~~\text{ for all }j\in \cW,\label{eo3}\\
&\qquad x_{ij}\ge 0,~~\text{ for all } j\in \cW.\label{eo4}
\end{align}
Notice that the only difference of this problem with respect to the offline model comes from the set of constraints \raf{eo3}, which specify the availability of element $j \in \cW$ at time $i$.  
By the KKT (necessary) conditions for optimality, 
there exists $\widehat\beta_i\ge0$, $\widehat\sigma_{i,j} \ge 0$, and $\widehat\mu_{ij}\ge 0$, for $i\in \cM$ and $j\in\cW$, such that \raf{eo1}-\raf{eo4} hold for $(\bu,\bx)=(\widehat\bu,\widehat\bx)$, and
\ \\
\begin{align}
-\pi_{i,1}'(\widehat u_i)w_{ij}+\widehat\beta_i+\widehat\sigma_{i,j}&=\widehat\mu_{ij},~~\text{ for all }i\in \cM\text{ and }j\in \cW,\label{kkt-eo1}\\
\qquad \widehat\mu_{ij}\widehat x_{ij}&= 0,~~\text{ for all }i\in\cM\text{ and }j\in\cW,\label{kkt-eo2}\\
\qquad\widehat\beta_i(\sum_{j\in\cW}\widehat x_{ij}-1)&=0,~~\text{ for all }i\in\cM,\label{kkt-eo3}\\
\qquad\widehat\sigma_{i,j}(\sum_{k\leq i}\widehat x_{kj}-1)&=0,~~\text{ for all }i\in \cM\text{ and } j\in\cW.\label{kkt-eo4}
\end{align}
For any  $j\in \cW$,  taking the sum of constraints in \raf{kkt-eo4} over $i \in \cM$ gives,
\ \\
\[
\sum_{i \in \cM}\widehat \sigma_{i,j} x_{ij} =   \sum_{i \in \cM}\widehat \sigma_{i,j} (1-\sum_{k< i}x_{kj}) \geq \sigma_{1,j},
\]
\ \\
which implies that 
\ \\
$$\sum_{i\in\cM,~j\in\cW}\widehat \sigma_{i,j} x_{ij} \geq \sum_{j \in \cW}  \sigma_{1,j}.$$
\ \\
From this formula,  together with  \raf{eo2},\raf{eo3}, \raf{kkt-eo1}-\raf{kkt-eo4}, it follows that
\ \\
\begin{align}
\sum_{i\in\cM,~j\in\cW}\pi_{i,1}'(\widehat u_i)w_{ij}\widehat x_{ij}&\geq \sum_{i\in\cM}\widehat\beta_i+\sum_{j\in\cW}\widehat\sigma_{1,j},\label{kkt-eo1-}\\
\qquad \widehat\beta_i+\widehat\sigma_{1,j} &\ge \pi_{i,1}'(\widehat u_i)w_{ij},~~\text{ for all }i\in \cM\text{ and }j\in \cW.\label{kkt-eo2-}
\end{align}
\ \\
From \raf{kkt-eo2-}, we have for $c>0, i\in\cS(c)$ and $j\in \cW$: $\widehat\beta_i+\widehat \sigma_{1,j}\ge c \cdot w_{ij}$. Thus, the pair of vectors $(\beta',\sigma'):=((\frac{\widehat\beta_i}{c}:~i\in\cS(c)),(\frac{\widehat\sigma_{1,j}}{c}:~j\in\cW))$ is feasible for the dual LP $O_f(\widetilde\bw)$. As such, the result follows from \raf{eq2}, \raf{eq1} and \raf{eq--}. \qed
\end{proof}

\newpage
\section*{Supplementary Text - Experimental Study}

We ran three studies with groups of $2n=6$ participants playing simultaneously; these were recruited using Amazon Mechanical Turk. Each study consisted of $m=13$ slots machines, and the studies only differ in how the payoffs $w_j$'s were distributed: Study~A represents a competitive market where payments follow a right-skewed Beta distribution $\beta(1,2)$; Study~B has payments following a symmetric Beta distribution $\beta(2,2)$, and Study~C represents a non-competitive market where payments follow a left-skewed Beta distribution $\beta(2,1)$. Table~\ref{table1} shows a summary of the users in each of the three studies.

\begin{table}[h]
\begin{center}
\begin{tabular}{|l|l|l|l|l|}
\hline 
\multirow{2}{*}{} & \multirow{2}{*}{Participants} & \multirow{2}{2cm}{Players per group} & \multirow{2}{*}{Outside payoff} & \multirow{2}{2cm}{Distribution of payoffs}\tabularnewline
 &  &  &  & \tabularnewline
\hline 
\hline 
Study~A & 113 & 3 & 6 cents & $\beta(1,2)$\tabularnewline
\hline 
Study~B & 87 & 3 & 6 cents & $\beta(2,2)$\tabularnewline
\hline 
Study~C & 87 & 3 & 6 cents & $\beta(2,1)$\tabularnewline
\hline 
\end{tabular}
\par\end{center}
\caption{\label{table1}Summary of participants for each experiment.}
\end{table}

\begin{center}
\begin{figure}[hbtp]
\begin{centering}
\subfloat[Screenshot of the interface shown at each round to participants. Users are shown their last payoff, their accumulated payment so far, and the three options for the next round. ]{
\begin{centering}
\includegraphics[width=0.9\textwidth]{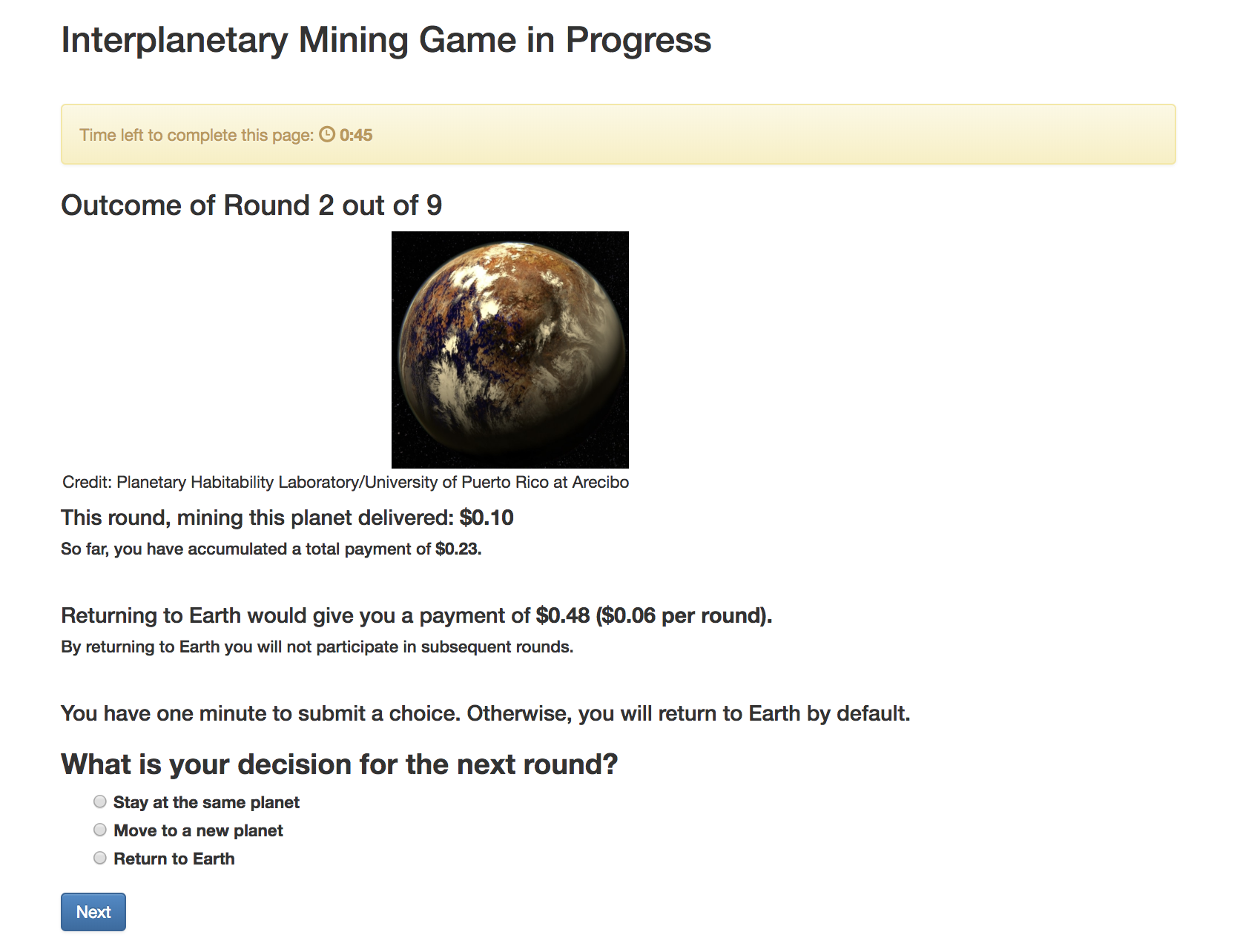}
\par\end{centering}
}\par
\vspace{20pt}
\subfloat[Screenshot of the waiting page after participants take a decision at each round. This is displayed until all 6 users in the game have taken a decision for the current round.]{
\begin{centering}
\includegraphics[width=0.9\textwidth]{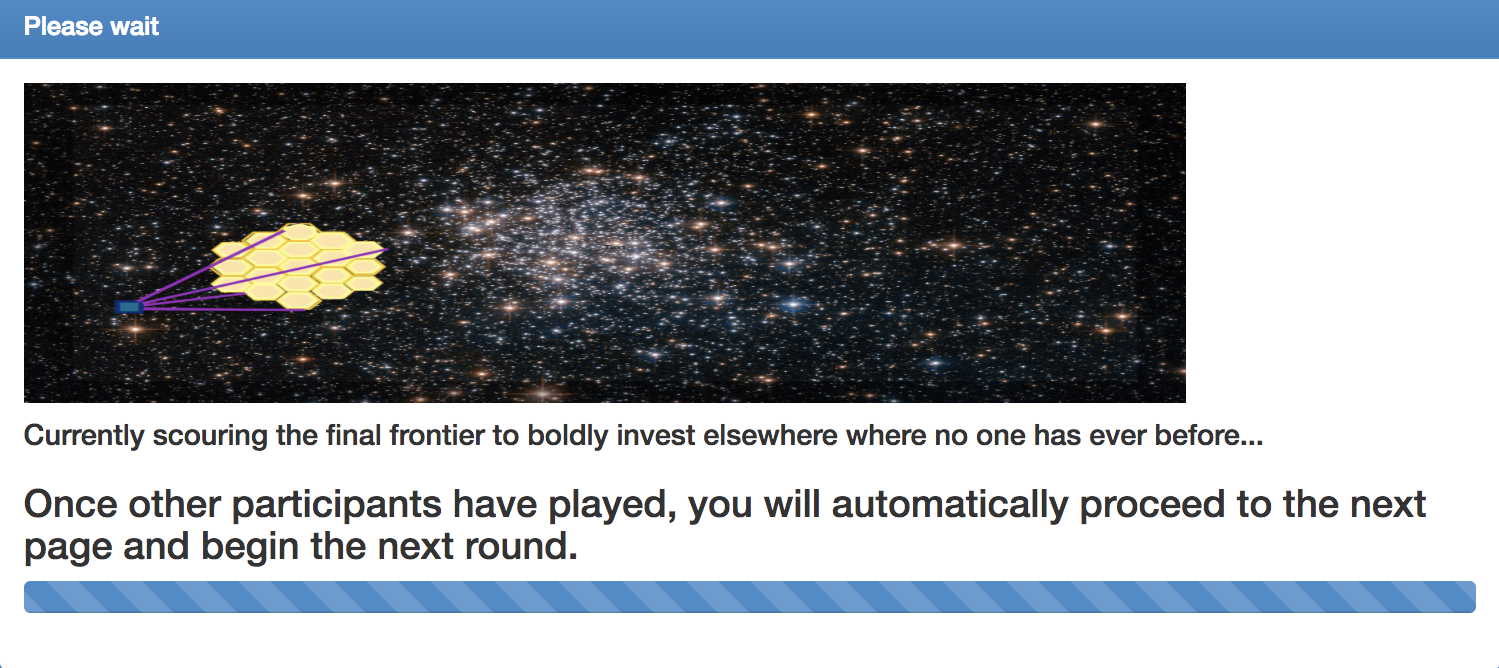}
\par\end{centering}
}
\par\end{centering}
\caption{\label{fig:screenshot} Screenshots of the experiments shown to participants.}
\end{figure}
\par\end{center}

\newpage
The algorithm used to match players takes two forms. The Fair matching algorithm assigns agents to slot machines in order to maximize the social utility of all agents.  The Selfish matching algorithm assigns agents to slot machines in order to maximize the number of agents requesting re-assignations to the system. 
Therefore, the Fair matching algorithm is given by solving problem $O_f(\bw)$. However, for the selfish matching, in order to solve $O_s(\bw;\bq)$ we need to learn the probability $q_i(u)$ that a user having received utility $u$, will come back to the system on round $i$. Given that individual data is sparse, we assume that the users are homogeneous, and start with a prior probability $q_1(u)=u(1-u)$ satisfying assumptions (A1) to (A3). Then, we update the probability based on the past choices of participants as follows:
\ \\
\[
q_{i+1}(u) = \alpha q_i(u) + (1-\alpha)f_i(u),
\]
\ \\
where $f_i(u)$ is the fraction of participants who, having received a payoff $u$ at round $i$, requested to be re-assigned. In other words, $f_i(u)$ is the empirical probability of switching, defined as the ratio of participants who requested a re-match from the system, having received a payoff of $u$ at round $i$.
Figures \ref{fig:prob_switch_A}, \ref{fig:prob_switch_B} and \ref{fig:prob_switch_C} depict the distribution $f_i(u)$ for the different rounds and algorithms in studies A, B and C, respectively. As can be seen, the switching rates of participants who received payoffs below 6 cents are surprisingly high, given that they could choose the risk-free outside option of 6 cents. Additionally, we observe a decreasing trend of switching rates for high payments (16-20 cents) as the number of rounds advances.
Taken together, these results suggest that participants are risk takers, but their propensity to take risk decreases in the final rounds of the game.


Figure~\ref{fig:PoA_ex} shows the average PoA over pairs of instances whose initial conditions are the same. We note that for some rounds the maximum was greater than 1, meaning that the Selfish algorithm performed better than the Fair algorithm in
some cases. For example, in Study~A, for some instances the social welfare under the Selfish algorithm was up to 1.5 times greater than the social welfare of the Fair algorithm.

In the main paper, we show that as markets become less competitive, the loss in social welfare generated by the Selfish matching algorithm decreases, however, the gain of engagement rates across studies remains constant. At the same time, the Selfish algorithm does not induce any significant change over the rate of users taking the outside option compared with the Fair algorithm; see Figure~\ref{fig:SI_drop}.
Drop rates, i.e., the percentage of participants that took the outside option, is slightly higher for the Fair algorithm, however, the difference is statically insignificant. 
Overall, the Selfish algorithm increases engagement rates, without increasing the drop rates.

\begin{figure}[h]
\begin{centering}
\subfloat[Fair Algorithm ]{
\includegraphics[width=0.85\textwidth]{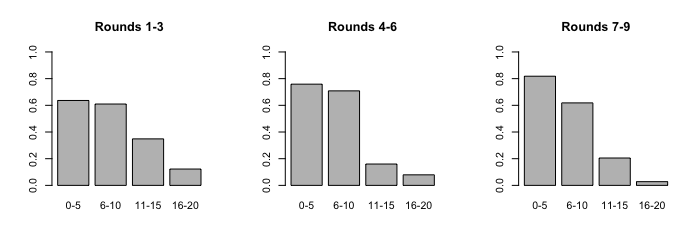}
}\par
\subfloat[Selfish Algorithm]{
\includegraphics[width=0.85\textwidth]{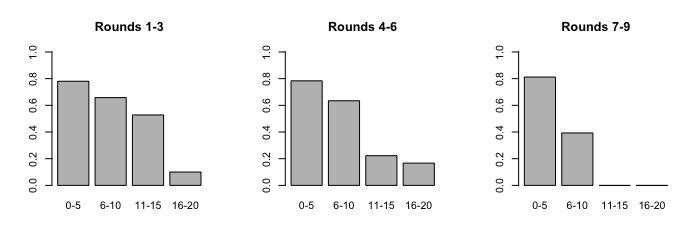}
}
\par\end{centering}
\caption{\label{fig:prob_switch_A} Probability of switching for Study~A. Distribution of the fraction of participants having received a payoff $u$ at round $i$ that requested to be re-assigned.}
\end{figure}

\begin{figure}[h]
\begin{centering}
\subfloat[Fair Algorithm ]{
\includegraphics[width=0.85\textwidth]{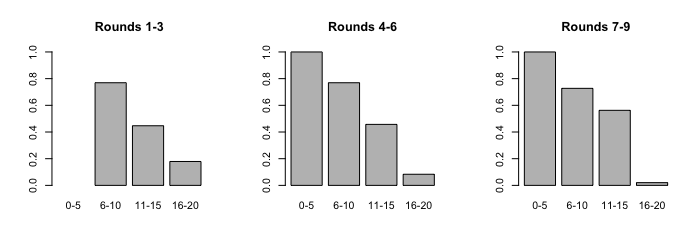}
}\par
\subfloat[Selfish Algorithm]{
\includegraphics[width=0.85\textwidth]{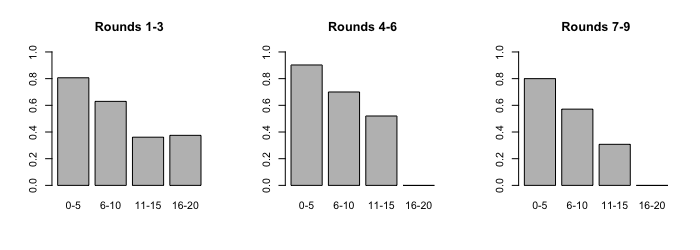}
}
\par\end{centering}
\caption{\label{fig:prob_switch_B} Probability of switching for Study~B. Distribution of the fraction of participants having received a payoff $u$ at round $i$ that requested to be re-assigned. Blank columns correspond to no user receiving a payoff in that corresponding range.   }
\end{figure}

\begin{figure}[h]
\begin{centering}
\subfloat[Fair Algorithm ]{
\includegraphics[width=0.85\textwidth]{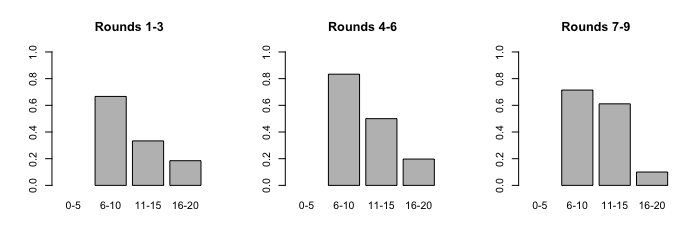}
}\par
\subfloat[Selfish Algorithm]{
\includegraphics[width=0.85\textwidth]{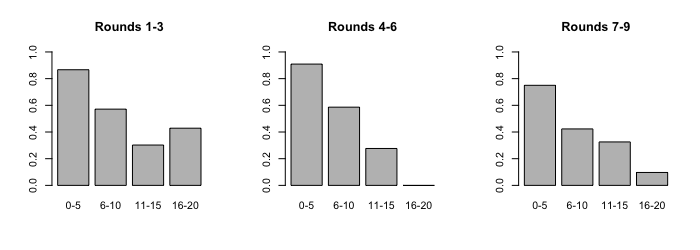}
}
\par\end{centering}
\caption{\label{fig:prob_switch_C}  Probability of switching for Study~C. Distribution of the fraction of participants having received a payoff $u$ at round $i$ that requested to be re-assigned. Blank columns correspond to no user receiving a payoff in that corresponding range. }
\end{figure}

\begin{figure}[h]
\begin{centering}
\includegraphics[scale=0.4]{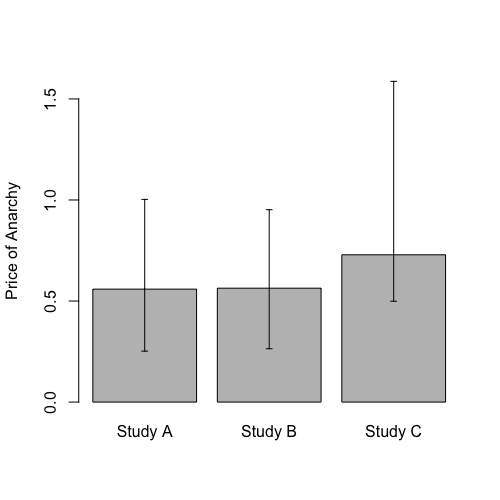}
\par\end{centering}
\caption{\label{fig:PoA_ex} Average Price of Anarchy for the three different studies. The plot shows the mean Price of Anarchy between the fair and selfish algorithm, controlling for initial conditions. Error bars depict the maximum and minimum levels.}
\end{figure}

\begin{figure}[h]
\begin{centering}
\subfloat[\label{fig:drop_A}Study~A]{\begin{centering}
\includegraphics[scale=0.35]{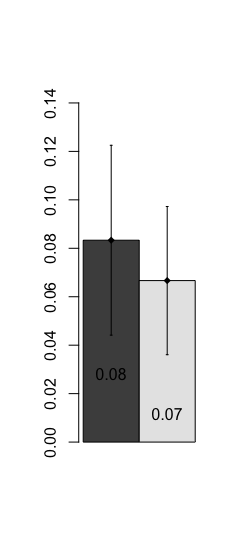}
\par\end{centering}
}\subfloat[\label{fig:drop_B}Study~B]{\begin{centering}
\includegraphics[scale=0.35]{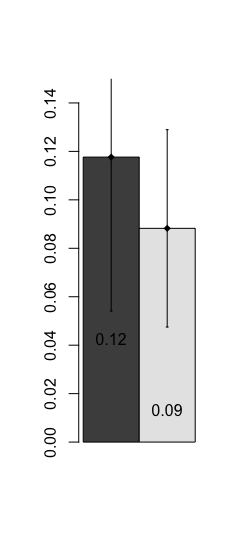}
\par\end{centering}
}\subfloat[\label{fig:drop_B}Study~C]{\begin{centering}
\includegraphics[scale=0.35]{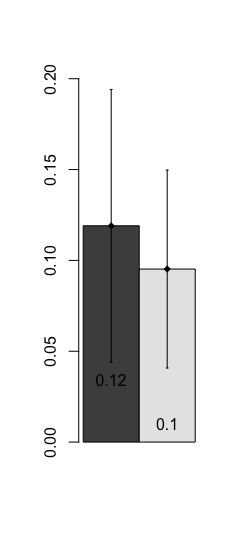}
\par\end{centering}
}
\par\end{centering}
\caption{\label{fig:SI_drop} Drop Rate. Drop rate is defined as the percentage of users taking the outside payment and thus, leaving the system. }
\end{figure}

\end{document}